\newif\iffull
\def\sfM{{\mathsf{M}}}
\def\sfR{{\mathsf{R}}}
\def\C{{\mathcal{C}}}
\def\M{{\mathcal{M}}}
\def\E{{\mathcal{E}}}
\def\L{{\mathcal{L}}}
\def\A{{\mathcal{A}}}
\def\F{{\mathcal{F}}}
\def\M{{\mathcal{M}}}
\def\false{{\mathit{false}}}
\def\true{{\mathit{true}}}
\def\sfW{{\mathsf{W}}}
\def\GC{\mathsf{GC}}
\def\GC{\mathsf{GC}}
\def\BA{\mathsf{BA}}
\def\sfA{\mathsf{A}}
\def\echo{\mathsf{echo}}
\def\output{\mathsf{output}}
\def\vote{\mathsf{vote}}
\def\echotwo{\mathsf{echo2}}
\def\voteone{\mathsf{vote1}}
\def\votetwo{\mathsf{vote2}}
\def\GCone{\mathsf{GC_{Auth}^*}}
\def\GCtwo{\mathsf{GC_{Sab}^*}}
\def\GCIT{\mathsf{\Pi_{AW}}}
\def\GCPKI{\mathsf{\Pi_{\sfM\sfR}}}
\def\Sync{\mathsf{Sync}}
\def\Piauth{\BA_{\mathsf{Auth}}}
\def\Pisab{\BA_{\mathsf{Sab}}}
\def\auth{\mathsf{Auth}}
\def\Sab{\mathsf{Sab}}
\def\CC{\mathsf{CC}}
\def\part{\mathsf{part}}
\def\pk{\mathsf{pk}}
\def\sk{\mathsf{sk}}
\def\decide{\mathsf{decide}}
\newcommand{\name}{\textsf{Juggernaut}\xspace}
\newcommand{\daniel}[1]{{\color{red}{[Daniel: #1}]}}
\newcommand{\yuval}[1]{{\color{blue}{[Yuval: #1}]}}
\title{\name: Efficient Crypto-Agnostic Byzantine Agreement}
\title{\name: Efficient Crypto-Agnostic\\ Byzantine Agreement}
\author{Daniel Collins\inst{1,2} \and Yuval Efron\inst{3} \and Jovan Komatovic\inst{4}\thanks{Most of this work was completed while Jovan Komatovic was at a16z crypto.}}
\institute{Purdue University \and Georgia Institute of Technology \and Columbia University \and École Polytechnique Fédérale de Lausanne (EPFL)\\
\email{colli594@purdue.edu, ye2210@columbia.edu, jovan.komatovic@epfl.ch}}
\begin{document}
\maketitle

\begin{abstract}
It is well known that a trusted setup allows one to solve the Byzantine agreement problem in the presence of $t<n/2$ corruptions, bypassing the setup-free $t<n/3$ barrier.
Alas, the overwhelming majority of protocols in the literature have the caveat that their security crucially hinges on the security of the cryptography and setup, to the point where if the cryptography is broken, even a single corrupted party can violate the security of the protocol.
Thus these protocols provide higher corruption resilience ($n/2$ instead of $n/3$) for the price of increased assumptions. 
Is this trade-off necessary?\\

We further the study of \emph{crypto-agnostic} Byzantine agreement among $n$ parties that answers this question in the negative.
Specifically, let $t_s$ and $t_i$ denote two parameters such that (1) $2t_i + t_s < n$, and (2) $t_i \leq t_s < n/2$.
Crypto-agnostic Byzantine agreement ensures agreement among honest parties if (1) the adversary is computationally bounded and corrupts up to $t_s$ parties, or (2) the adversary is computationally unbounded and corrupts up to $t_i$ parties, and is moreover given all secrets of all parties established during the setup.
We propose a compiler that transforms any pair of resilience-optimal Byzantine agreement protocols in the authenticated and information-theoretic setting into one that is crypto-agnostic.
Our compiler has several attractive qualities, including using only $O(\lambda n^2)$ bits over the two underlying Byzantine agreement protocols, and preserving round and communication complexity in the authenticated setting.
In particular, our results improve the state-of-the-art in bit complexity by at least two factors of $n$ and provide either early stopping (deterministic) or expected constant round complexity (randomized).
We therefore provide fallback security for authenticated Byzantine agreement \emph{for free} for $t_i \leq n/4$.
\end{abstract}

\section{Introduction}
Byzantine agreement (BA), the problem of reaching agreement between $n$ parties, of which at most $t$ are \emph{corrupt}, and controlled by an \emph{adversary}, is arguably the core problem in fault-tolerant distributed computing, with research spanning more than four decades~\cite{PeaseSL80,RSA:GarKia20}.
In this paper, we focus on the synchronous case, in which messages are delivered to parties at most $\Delta$ time after being sent.
Cryptographic tools and assumptions are central in the design of BA protocols, both for improved efficiency in various regimes and as well as to circumvent lower bounds~\cite{dolev1983authenticated,RSA:GarKia20,EC:GKOPZ20,TCC:BKLL20}.
Perhaps the most eminent of those cryptographic tools are digital signatures, typically instantiated alongside a public key infrastructure (PKI) assumption, in which it is assumed that on top of knowing a list of identifiers of all parties participating in the protocol, each identifier has a corresponding public-secret key pair $(pk,sk)$ with $pk$ being known to all parties.
By levearging PKI, it is well known that BA can be solved in the presence of $t<\frac{n}{2}$ corrupt parties~\cite{dolev1983authenticated,C:KatKoo06}, while setup-free protocols must assume $t<\frac{n}{3}$ (even assuming cryptography like signatures)~\cite{PeaseSL80,LSP82}.

The reliance on PKI mandates two highly crucial assumptions.
First, that any underlying cryptography remains secure.\footnote{This does not apply to protocols based on primitives like pseudosignatures~\cite{pfitzmann1996information} that are information theoretically-secure but require setup; these protocols generally have a high cost and are not deployed at present.}
Second, that the secrets established at setup remain secure.
The vast majority of literature, and all practical work on BA that assumes PKI, suffers from the following shortcoming: the security of the protocol hinges on the security of the employed cryptographic primitives, to the point where even a \emph{single} corrupt party can violate security, if the cryptography used turned out to be broken.
This precarious state of affairs is not only a theoretical concern, with perhaps the most notorious example being the transaction malleability attack in Bitcoin which resulted in losses of hundreds of millions of dollars~\cite{ESORICS:DecWat14}.
Reliance on computational assumptions is more generally risky as they may, at any time, be publicly (let alone discreetly) broken, either classically or due to the looming threat of quantum computation.
In some sense, despite the weaker corruption resilience that information-theoretic, setup-free protocols offer, they have the benefit of having no other potential weak spots in their security.

Can we get the best of both worlds? That is, a BA protocol that has optimal resilience given PKI and secure cryptography (the \emph{authenticated} setting), that still maintains high security against a computationally unbounded adversary that can nullify any setup (the \emph{sabotaged} setting)?
Note that the specific setting in which the protocol executes is chosen by the adversary at the beginning of the protocol, and in particular \emph{honest} (i.e., non-corrupt) parties are in general \emph{oblivious} to the actual setting in which the protocol executes.
Designing such protocols is precisely the question we address in this paper.

This question was in fact studied two decades ago by Fitzi, Holenstein and Wullschleger~\cite{EC:FitHolMul04} in the broader context of secure multi-party computation (MPC), in which they design an MPC (and thus a BA) protocol that has what they call \emph{hybrid} security.
In particular, it can tolerate up $t_s$ corrupt parties against a computationally bounded adversary and secure cryptography, and up to $t_i$ corrupt parties under no computational or setup assumptions, for any $t_i\leq t_s < \frac{n}{2}$ such that $t_s+2t_i<n$.\footnote{The model of \cite{EC:FitHolMul04} (and \cite{cryptoeprint:2009/434}) as stated here does not consider passive compromise of the setup as we do, and additionally considers inconsistent PKI which imposes different resilience bounds; we discuss this further later.}
They also prove that this bound is tight, even for BA.
In particular, a protocol can support up to $t_i \leq \frac{n}{4}$ faults with \emph{no loss of resilience} given a computationally-bounded adversary, i.e., with optimal $t_s < \frac{n}{2}$ fault tolerance!
While an impressive feasibility result for general MPC, when one focuses on BA, existing protocols, namely from \cite{EC:FitHolMul04} and subsequent work~\cite{cryptoeprint:2009/434}, suffer from several drawbacks hindering their usability.



\sloppy{
\paragraph{Communication Complexity.} Fitzi et al.~\cite{EC:FitHolMul04} propose a hybrid broadcast protocol with $O(\lambda n^4)$ communication complexity (in bits) that they use as a subroutine to solve MPC.
Subsequent work~\cite{cryptoeprint:2009/434} also builds broadcast with complexity $O(\lambda n^4)$ as written and using state-of-the-art sub-routines $O(\lambda n^3 \log{n})$ or $O(\lambda n^3 + n^3 \log^2{n})$ bits~\cite{cryptoeprint:2023/1172,civit2024error,EC:AshCha24} (ignoring problems with composition due to non-simultaneous termination~\cite{JC:CCGZ19}).
Building crypto-agnostic BA from parallel broadcast generically~\cite{civit2023validity} would therefore require at least $O(\lambda n^4)$ bits.
This leaves a large gap from the classic $\Omega(n^2)$ lower bound on the communication complexity of BA~\cite{dolev1985bounds}.}


\paragraph{Round Complexity.} It is well known that any deterministic BA protocol resilient against up to at most $t$ Byzantine parties must take at least $t+1$ rounds in the worst case~\cite{dolev1983authenticated}.
This can be circumvented if the protocol is \emph{early stopping}, thereby using just $O(f)$ rounds when $f < t$ corruptions actually occur~\cite{perry1984authenticated,dolev1990early}, or if it is \emph{randomized}, where expected constant-round protocols are known~\cite{C:KatKoo06,AbrahamDDN019}.
Existing hybrid protocols as written however are deterministic and are not early stopping, requiring $O(n)$ rounds of communication in all cases, and it is not clear that the protocol of \cite{cryptoeprint:2009/434} can immediately lead to constant-round BA since parallel composition of expected constant-round protocols generically results in expected $O(\log{n})$ round complexity~\cite{ben2003resilient}, let alone the high communication this would incur even if it did work.


\paragraph{General Compiler.} Existing hybrid protocols are either directly concrete \cite{EC:FitHolMul04} or make use of a linear number of instances of underlying building blocks like broadcast and are thus not amenable to efficient implementation~\cite{cryptoeprint:2009/434}.
Ideally, one would want to be able to construct an efficient protocol $\Pi$ with hybrid security in a \emph{black box} manner from given protocols $\Piauth,\Pisab$ for the authenticated and sabotaged settings, respectively, without having to solve BA from scratch.

\paragraph{Our Contributions.} Our main contribution is a compiler that enjoys all of the above properties.
Our compiler transforms any two given protocols $\Piauth,\Pisab$ in the authenticated and sabotaged settings, respectively, into a protocol $\name$ with crypto-agnostic security with optimal resilience $t_s+2t_i<n$, $t_i\leq t_s<\frac{n}{2}$.
Furthermore, $\name$ uses $\Piauth,\Pisab$ in a black-box manner, $\name$ has an additive factor of just $O(\lambda n^2)$ bits of communication over $\Piauth,\Pisab$.
Our protocol optimizes for the practical authenticated case: if $\Piauth$ is early stopping, then so is $\name$ in the authenticated setting. 
Moreover, if $\Piauth$ is a randomized protocol with expected round complexity $R$, then $\name$ has expected round complexity $O(R)$ in the authenticated setting.
Therefore, our protocol effectively provides crypto-agnostic security to an authenticated protocol \emph{for free}.

Along the way, we propose two new graded consensus gadgets with $O(\lambda n^2)$ bit complexity and constant (worst-case) round complexity that provide partial security guarantees in one world (authenticated resp. sabotaged) and full security in the other (sabotaged resp. authenticated) that may be of independent interest.

Using our compiler, we propose two concrete protocols, one deterministic and one randomized.
Our deterministic protocol has $O(\lambda n^2)$ bit complexity in all cases, has $O(f)$ round complexity for $f$ actual failures in the authenticated case and uses $O(n)$ rounds in the sabotaged case.
Our randomized protocol has $O(\lambda n^2)$ expected bit complexity and constant expected round complexity in the authenticated case, and uses $O(\lambda^2 n^2)$ bits and $O(\lambda + f)$ rounds in the sabotaged case.


\subsection{Technical Overview}\label{ssec:technical_overview}
We first would like to stress the complexity of the problem by examining state-of-the-art authenticated BA protocols achieving optimal corruption resilience.
Intuitively, without setup and given $t<\frac{n}{3}$, a \emph{quorum} consisting of at least $\frac{2}{3}$ of parties suffices to convince an honest party to adopt a value, as a counting argument shows that no quorum for a different value can exist.
This is no longer the case when one demands $t<\frac{n}{2}$, and the overwhelming majority of protocols make use of signature based \emph{equivocation checks} to assert that only one value will be adopted by honest parties during the protocol.
Any attempt to increase the size of a quorum can be met with silence from corrupt parties, resulting an unhalting executions due to $t<\frac{n}{2}$.
On the other hand, any attempt to relax equivocation checks can be met with agreement violation attacks by corrupt parties.
This forces one to rethink the problem from a first principles approach.

\paragraph{A Strawman Solution: Black Boxes and Graded Consensus.}
A natural approach is to use protocols secure in each setting as black boxes.
Let $\Piauth,\Pisab$ be protocols solving BA in the authenticated setting and sabotaged (i.e., setup-free and information-theoretic) setting, respectively.
Intuitively, we would like to run $\Piauth$ first, check if agreement was reached, and if not, run $\Pisab$.
A typical tool in the literature for detecting pre-existing agreement is the \emph{graded consensus} (GC) primitive, which allows parties to output, along with their value, a grade indicating their level of confidence in the output.
The literature luckily contains efficient implementations of GC in both the authenticated setting~\cite{Momose2021} and the sabotaged setting~\cite{AttiyaW23}.
Alas, we are faced with a trickier scenario.
Recall that the specific setting in which the protocol runs is chosen by the adversary at the beginning of protocol execution, and in particular they can choose a setting that renders any existing GC useless and provides no guarantees.

\paragraph{Building Our \name\ Protocol.} 
An observation we make, inspired by similar techniques from \emph{network-agnostic} protocols~\cite{TCC:BluKatLos19,C:BluZhaLos20} that provide security under synchrony \emph{or} asynchrony, is that we can design GCs that work as usual in one of the settings and provide partial guarantees in the other setting in order to build a crypto-agnostic protocol.
Designing such GCs with optimal corruption resilience $(t_s+2t_i<n,t_i\leq t_s<\frac{n}{2})$, $O(\lambda n^2)$ bit complexity and constant round complexity, as well as appropriately combining everything together (which brings up technical challenges, as explained below), are the main technical contributions of this paper.

Recall that Byzantine agreement provides \emph{consistency} (all parties output the same value), \emph{termination} (all parties output a value and halt), and some \emph{validity} property (in this work, namely that if all parties input the same value, that value is decided).
We consider graded consensus with \emph{two} grades, either 0 or 1: \emph{graded validity} then requires all honest parties output input $v$ and grade 1, and \emph{graded consistency} requires that if any honest party outputs $(v, 1)$, then all honest parties output $(v, 0)$ or $(v, 1)$.

As sketched above, our high level approach is to run $\Piauth$, check if agreement was reached, and run $\Pisab$ if not.
However, if we are in the sabotaged setting, $\Piauth$ can behave arbitrarily, and if we are in the authenticated setting, $\Pisab$ can behave arbitrarily.
We introduce two graded consensus protocols to deal with this.
Our first, $\GCone$, provides \emph{full} security in the authenticated case for up to $t_s$ corruptions, and ensures \emph{validity and termination} for up to $t_i$ corruptions in the sabotaged case.
Our second, $\GCtwo$, provides the \emph{opposite} guarantees: full security in the sabotaged case with $t_i$ corruptions, and validity and termination in the authenticated case with $t_s$ corruptions.

At a high level, our protocol, which we call \name, proceeds as follows.
First, each party $p_i$ runs $\GCone$ using their input $v_i$, which outputs pair $(v_1, g_1)$.
$v_1$ is then fed to $\Piauth$, which outputs $v_2$.
In the authenticated case, $\GCone$ and $\Piauth$ provide full security, and therefore all honest parties output the same $v_2$ from $\Piauth$. 
In the sabotaged case, however, we only have \emph{validity} of $\GCone$.
To preserve validity in the sabotaged case, parties then input $v_1$ to $\GCtwo$ if $g_1 = 1$, and otherwise input $v_2$ to $\GCtwo$; let the output of $\GCtwo$ be $(v_3, g_3)$.
Since $\GCtwo$ provides validity in the authenticated case, all honest parties will output the same $(v_2, 1)$, where $v_1 = v_2$ in `valid' runs of \name\ (so validity is preserved from $\GCone$ up to this point).

At this point we do not yet have consistency in the sabotaged case, only validity.
Therefore, all parties run $\Pisab$ with input $v_3$, which provides full security in the sabotaged case.
However, $\Pisab$ provides \emph{no security} in the authenticated case.
To rectify this, as well as to provide early stopping in the authenticated case, parties multicast their output value $v_3$ as $(\decide, v_3)$ \emph{only if} $g_3$ = 1 and wait for $\Delta$ time.
Then, on receipt of $n - t_s$ $(\decide, v_3)$ messages (which is always guaranteed in the authenticated case), parties output $v_3$ and can safely halt.
In the sabotaged case, however, some but not all parties may terminate at this stage.
Thus, to ensure all other parties terminate, parties terminate if they receive $(\decide, v)$ from $n - t_s - t_i$ parties \emph{after} running $\Pisab$ (if they have not yet halted).
Therefore, if an honest party halts due to receiving $n - t_s$ $(\decide, v_3)$ messages, all honest parties will receive $n - t_s - t_i$ $(\decide, v)$ messages and halt.
Otherwise, all honest parties will not halt before terminating from $\Pisab$.
In this case, if an honest party receives $n - t_s - t_i$ $(\decide, v)$ messages, then since $n - t_s - t_i > t_i + 1$, one honest party must have output $(v, 1)$ from $\GCtwo$, and by graded consistency, all honest parties output $v$, and by consistency in the sabotaged case of $\Pisab$, all honest parties will output $v$ from that, and thus agree on $v$.
Otherwise, consistency of $\Pisab$ ensures the consistency of $\Pisab$.

\paragraph{Dealing with Non-Simultaneous Termination.} The above works well if $\Piauth$ is such that all parties terminate at the same time.
However, if $\Piauth$ is early-stopping or randomized, parties will not in general output from $\Piauth$ at the same time.
For instance, the adversary can force one honest party to produce an output significantly earlier than any other honest party.
To rectify this, we utilize the \emph{synchronizer} primitive, which ensures that honest parties ``move on'' from $\Piauth$ at roughly the same time.
Concretely, the synchronizer guarantees that honest parties quit executing $\Piauth$ within at most one round of each other, regardless of whether we are in the authenticated or sabotaged setting.
Furthermore, in the authenticated setting, the synchronizer ensures that honest parties progress from $\Piauth$ with (asymptotically) no additional round overhead: (1) if $\Piauth$ is early stopping with complexity $O(f$), then honest parties progress in $O(f)$ rounds, and (2) if $\Piauth$ is randomized with expected round complexity $R$, then honest parties progress in $O(R)$ rounds.
This is essential to guarantee that, in the authenticated case, \name introduces no asymptotic round complexity overhead.



\paragraph{Constructing Crypto-Agnostic Graded Consensus.}
We provide efficient compilers that use a single instance of graded consensus protocol secure in a given setting that provide validity and termination in the other setting.
Namely, our two compilers each incur \emph{three} additional rounds and $O(\lambda n^2)$ communication overhead over the black boxes we use, for example the graded consensus protocols of Momose-Ren~\cite{Momose2021} and Attiya-Welch~\cite{AttiyaW23} which themselves have constant round complexity and quadratic communication complexity.

Our first protocol $\GCone$ provides full authenticated security and validity and termination in the sabotaged case.
Recall that the underlying authenticated graded consensus protocol, say $\GCPKI$, in general provides \emph{no} security in the sabotaged case.
Our goal is thus to augment $\GCPKI$ with a procedure to ensure sabotaged $t_i$-validity and termination.
The main challenge is ensuring that this procedure does not interfere with the authenticated security of $\GCPKI$.

The key observation lies in the fact that if the parties aren't in the validity case, then parties can change their inputs to $\GCPKI$ arbitrarily without violating security.
Therefore, parties in our protocol cast votes in search of a sufficiently large quorum (of $n - t_i$ parties) to output a value with grade $1$.
If such a quorum is found, a certificate of the quorum is made and broadcast to the rest of the parties. Upon receiving a unique certificate of this kind, a party replaces its input with the received value, and enters 
$\GCPKI$ with the new input. Validity in the sabotaged case then becomes immediate, and careful analysis is required to ensure that this added part of the protocol can not violate the $t_s$-security of the protocol in the authenticated setting.

Our second protocol $\GCtwo$ provides the opposite: sabotaged security and validity and termination in the authenticated case.
The main challenge stems from the fact that honest parties might not initiate the protocol $\GCtwo$ in the same round, due to possibly different exit times from $\Piauth$. Thanks to the Synchronizer (see \cref{ssec:synchronizer}), which leverages the synchrony assumption of the network, we know that honest parties commence $\GCtwo$ at most 1 round apart from one another.
This allows us to design $\GCtwo$ with that in mind, and not deal with the general case of asynchrony.

Similarly to $\GCone$, our augmenting of $\GCIT$ with sabotaged $t_s$-validity relies on the observation that when not in the validity case, parties may change their inputs arbitrarily without harming the security of the protocol.
A first attempt at augmenting $\GCIT$ with authenticated $t_s$-validity might look as follows: Echo the inputs, and look for a quorum of $n-t_s$ echos for a value $v$, broadcast a certificate $\C(v)$ of this quorum (if found) using threshold signatures, and if no conflicting certificates were received, cast a vote for $v$, deciding it with grade $1$ if a quorum of $n-t_s$ of votes was received.
Validity in the authenticated case clearly holds, but alas, in the sabotaged setting, for reasons that become clear in the analysis, this approach fails. 

The key observation here is that in the authenticated setting, we only care about validity, and so we can impose stricter conditions on deciding a value prior to running $\GCIT$. Specifically, our solution stipulates that witnessing a unique certificate for a value is no longer sufficient for party to decide a value, it must have also \emph{created} a certificate itself. This saves us from consistency violations in the \emph{sabotaged} setting by making sure that if an honest party decides a value before $\GCIT$, then all honest parties have seen a certificate for that value.

\subsection{Related Work}\label{ssec:related_work}
\paragraph{Hybrid Security.} Two previous works that we are aware of consider fallback security w.r.t. an unbounded adversary for an authenticated protocol~\cite{EC:FitHolMul04,cryptoeprint:2009/434} (both cited above), both focused on the feasibility of MPC.
These works additionally allow the adversary to completely compromise the PKI, given the adversary corrupts up to $t_p$ parties - they call the resulting model \emph{hybrid security}.
They show to provide security for $t_p > 0$ that $2t_s + t_p < n$ is necessary and sufficient.\footnote{In \cite{cryptoeprint:2009/434}, the authors show for $t_p = 0$ that $t_s + 2 t_i < n$ is enough, and so $t_s \leq n/2$ and $t_i > 0$ (in their model) is possible for functionalities like broadcast and MPC with (in their case unanimous) abort.}
Thus, to guarantee any security in this case, one must sacrifice resilience in $t_s$.
Our model further differs from previous work in that in the sabotaged case, we additionally allow the adversary to passively compromise the setup even under $t_s + 2t_i < n$, whereas the adversary cannot not do so in~\cite{EC:FitHolMul04,cryptoeprint:2009/434} unless $t_p$ or less corruptions are made.
This is of particular note for information-theoretic authenticated BA with fallback since one can set $t_s = \frac{n}{2} - 1$ and still achieve fallback security for $t_i \leq \frac{n}{4}$ under passively compromised setup.

As noted above, \cite{EC:FitHolMul04} build hybrid broadcast with $O(\lambda n^4)$ bit and $O(n)$ round complexity.
They first build `weak broadcast' which provides security in their hybrid setting, then generically build graded consensus with $O(n)$ instances of weak broadcast, followed by $O(n)$ instances of graded consensus (one per round) for the final broadcast protocol.
We do not see how to easily reduce this complexity without starting from scratch, which we indeed do in this work.

\cite{cryptoeprint:2009/434} also build hybrid broadcast using $O(\lambda n^4)$ bits and $O(n)$ rounds as written; we now consider its most expensive components.
First, they run Dolev-Strong broadcast w.r.t. the sender (or as we note, any $t < n$ broadcast protocol).
Then, they run so-called broadcast with extended validity, which they build from $n$ instances of perfectly-secure broadcast~\cite{berman1989towards} with a message-signature pair as input.
Finally, they run parallel broadcast where each honest party may input $O(n)$ signatures on a message.
Perfectly secure parallel broadcast with $O(\lambda)$-sized input can be built using $O(\lambda n^3 \log{n})$ bits and $O(f)$ rounds for $f$ actual corruptions~\cite{civit2024error}, and otherwise $O(\lambda n^2 + n^3 \log^2{n})$ bits and constant rounds~\cite{EC:AshCha24}.
Authenticated parallel broadcast under honest majority with $O(\lambda + n)$-sized inputs (using multisignatures) can be built using $O(n^3 + \lambda n^2)$ bits and $O(n)$ rounds~\cite{civit2024dare}, and otherwise $O(\lambda n^3)$ bits and constant rounds~\cite{cryptoeprint:2023/1172}.
Ultimately, we cannot escape using at least $O(\lambda n^3)$ bits using the approach of \cite{cryptoeprint:2009/434} to build constant-round broadcast, let alone BA with constant round complexity.

\paragraph{Fallback Security.}
Many works consider providing additional security guarantees to primitives like BA or MPC on top of or in exchange for some security in the `base' setting (in our work the authenticated setting) going back at least to Chaum~\cite{C:Chaum89} in MPC; we survey some below.
\cite{gordon2010authenticated} considers a model where the secrets of $t_c$ parties can be exposed to the adversary \emph{and} $t_a$ additional parties can be corrupted, showing in particular $2 t_a + \min\{t_a, t_c\} < n$ for (fixed) $t_a, t_c > 0$ is sufficient and necessary; observe that their model is incomparable to ours.
An \emph{accountable} BA protocol~\cite{crime_punishment} provides security given $t$ corruptions, and given $t' > t$ corruptions, parties can generate a proof that some parties must have behaved maliciously.
This resembles MPC with identifiable abort~\cite{C:IshOstZik14}, namely MPC under corrupt majority that ensures a corrupt party is unanimously detected if the protocol aborts, but is not confined to the synchronous setting.
\cite{AC:LLMMT20} considers synchronous MPC under $t_s$ corruptions with \emph{responsiveness} (like asynchronous protocols) under $t_r$ corruptions, and achieve a comparable bound as us, namely $t_s + 2t_r < n$.
\cite{lucas2010hybrid} considers trade-offs between information-theoretic robustness (preventing adversarial abortion) and computational privacy assuming broadcast and secure channels.

A line of work initiated in~\cite{TCC:BluKatLos19} considers \emph{network-agnostic} security, that is, providing security for up to $t_a$ corruptions if the network is asynchronous and $t_s$ if it is synchronous.
Some works consider feasibility results, including \cite{TCC:BluKatLos19} for BA, which is possible if and only if $2t_s + t_a < n$, among others~\cite{C:BluZhaLos20,AC:BluKatLos21,FC:DelLiu23}, as well as performance \cite{TCC:DelHirLiu21,C:BCLL23}.
The recent work of \cite{EC:DelErb24} has a similar motivation to ours in that the authors show that network-agnostic BA can be built `for free' in the \emph{synchronous} case, namely with $O(\lambda n^2)$ bit and constant round overhead.

\paragraph{Byzantine agreement.} There is a rich history of work on the Byzantine agreement problem in each our considered settings (when considered separately).
In the authenticated setting, the state-of-the-art protocol for BA in terms of communication complexity and latency is \cite{civit2024dare} in the deterministic case, in which they showcase a protocol with resilience $t<\frac{n}{2}$ with optimal $O(f)$ round complexity when $f \leq t$ corruptions actually occur, and $O(\lambda n^2)$ bit complexity.
In the randomized case, \cite{AbrahamDDN019} presents a protocol with $O(1)$ expected latency, and $O(\lambda n^2)$ expected bit complexity, with resilience of $t<\frac{n}{2}$. For the sabotaged setting, the protocol of \cite{MostefaouiMR15} presents a protocol with optimal $t<\frac{n}{3}$ resilience, $O(1)$ expected latency, and $O(n^2)$ expected bit complexity. Alas their protocol assumes the existence of a common coin.
\cite{berman1992bit} and \cite{coan1992modular} were the first to solve BA with $O(n^2)$ bits and linear rounds; later \cite{lenzen2022recursive} built such a protocol that is additionally early stopping.
In a breakthrough result, Chen~\cite{chen2021optimal} solved BA with strong unanimity ($\bot$ can be decided when not all honest parties propose the same value) with $O(nL + n^2 \log{n})$ for messages of length $L$.
\cite{civit2024error} achieve external validity~\cite{C:CKPS01} (decided values satisfy a given predicate) with $O((nL + n^2)\log{n})$ bit complexity; note \name\ can be modified to support external validity by adding appropriate predicate checks.


\section{Preliminaries and Definitions}\label{sec:prelims}
Throughout the paper, we consider a fully connected network of $n$ parties $p_1, \dotsc, p_n$ that communicate over point-to-point authenticated channels.
Some fraction of these parties are controlled by an adversary and may deviate arbitrarily from the protocol.
We call these parties \emph{corrupt} and the other parties \emph{honest}.
When we say that a party \emph{multicasts} a message, we mean that it sends it to all $n$ parties in the network.
We denote the security parameter by $\lambda$.
Throughout the paper, we assume a universe of values $V$.


\paragraph{Public Key Infrastructure.} We assume that the parties have established a public key infrastructure before the protocol execution, which is a bulletin board or plain PKI.
Namely, each party $p_i$ has a secret-public key pair $(\sk_i,\pk_i)$ for the use of cryptography. In this paper, we assume that those keys are used to instantiate a secure digital signature scheme and all messages in our protocols (but not necessarily building blocks) \emph{are implicitly signed}. 


\paragraph{Threshold Signatures.} On top of a PKI, a trusted setup allows the parties to map any vector of $f$ valid signatures of the same message $m$ by different parties (henceforth referred to as an $f$-certificate of $m$) into a single message $\pi$ of length $O(\lambda)$, called a \emph{threshold signature}, denoted $\mathcal{C}(m)$, with the property that the signature certification algorithm passes on $\pi$ iff $\pi$ is the image of a valid $t$-certificate on $m$.

\paragraph{Communication Model.}
We assume a synchronous network, where all parties begin the protocol at the same time, the clocks of the parties progress at the same rate, and all messages are delivered within some known finite time $\Delta > 0$ (called the network delay) after being sent.
In particular, messages of honest parties cannot be dropped from the network and are always delivered.
Thus, we can consider protocols that execute in rounds of length $\Delta$ where parties start executing round $r$ at time $(r-1)\Delta$. We further assume that $\Delta$ is public information and is known to all parties and the adversary, and any action carried out by any party can depend on $\Delta$.
With that in mind, and to avoid notation encumbrance, we omit $\Delta$ from the list of inputs to algorithms and protocols in our definitions.

\paragraph{Adversarial Model.} The adversary model we consider in the paper is an amalgamation of two common adversaries in the literature. Formally, given two parameters $t_i\leq t_s < n/2$ such that $2t_i+t_s<n$, the adversary $\A$ can be described as a tuple $\A=(\A_0,\A_1,\A_2)$ such that 
\begin{itemize}
    \item $\A_0(\Pi,r,Tr_r)=\F_r$ where $\F_r$ denotes the set of corrupt parties at round $r$. I.e. $\A_0$ is an algorithm that chooses for every round the set of corrupt parties, based on the description of the protocol $\Pi$, the round $r$, and the transcript $Tr_r$ of the protocol up to round $t$. We distinguish between two types of adversaries in this context. A \emph{static} adversary satisfies that $\A_0(\Pi,r,Tr_r)=\A_0(\Pi,0,Tr_0)$ for all rounds $r$. An \emph{adaptive} adversary satisfies $\A_0(\Pi,r,Tr_r)\subseteq \A_0(\Pi,r+1,Tr_{r+1})$ for all rounds $t$. Unless otherwise stated, we assume an adaptive adversary 
    For a given adversary $\A$, we say that a party $p$ is \emph{forever honest} if $p\not\in \F_r$ for all rounds $r$.
    \item $\A_1(\Pi,r,Tr_r,\F_r)$ describes the algorithm run by corrupt parties throughout the execution of the protocol: it may depend on the description of the protocol $\Pi$, the round $r$, the transcript $Tr_r$ of the protocol up to round $r$, and the internal state of all corrupt parties at round $r$. In this context, we distinguish between two settings, characterized by the capabilities of the adversary.
        \begin{itemize}
            \item \textbf{Sabotaged.} $\A_1$ (and $\A_0$) are computationally unbounded, and in particular can break the security of any cryptographic primitive used in the protocol via the PKI. Furthermore, the adversary has complete access to all the information, secret and public of any setup protocol carried out by the parties prior to receiving their inputs. Equivalently, in the ideal setup world, the adversary receives from the trusted dealer of the setup all communication sent to any party. 
            \item \textbf{Authenticated.} $\A_1$ (and $\A_0$) are computationally bounded, and there is a trusted PKI setup. 
            In this case, we assume in our security proofs that the cryptographic primitives used in the protocol provide perfect security, which, by a standard hybrid argument, does not affect the generality of our result and serves to simplify the exposition.
        \end{itemize}
    \item $\A_2(\Pi)\to \set{0,1}$. The adversary, at round $r=0$ can view the protocol description $\Pi$ and choose a bit $b$ that indicates the \emph{setting} of the current execution of the protocol. This choice is not revealed to honest parties. The following holds.
        \begin{itemize}
            \item If $b=1$, then $\A$ chose the sabotaged setting. Furthermore, $|\F_r|\leq t_i$ for all rounds $r$.
            \item If $b=0$, then $\A$ chose the authenticated setting. Furthermore $|\F_r|\leq t_s$ for all rounds $r$.
        \end{itemize}
    
\end{itemize}

We say that the adversary $\A$ is $t$-bounded if $|\F_r|\leq t$ holds for all rounds $r$ 

\noindent Definitions and properties that we introduce hereafter are only required to hold with probability $1 - \negl[\lambda]$.

\subsection{Distributed Primitives}

When relevant, our primitives take input from a value set $V$ with $|V| \geq 2$; we assume that default value $\bot \not\in V$.
Note that $\bot$ is considered as a valid output in each protocol.


\begin{definition}[Byzantine Agreement]\label{def:BA}
Let $\Pi$ be a protocol executed by parties $p_1, \dots, p_n$, where each party $p_i$ begins by calling $\mathsf{propose}$ with input $v_i \in V$. The $\BA$ problem pertains to the following properties.
\begin{itemize}
\item \emph{\textbf{Validity:}} $\Pi$ is sabotaged (authenticated) $t$-$\mathsf{valid}$ if the following holds in the sabotaged (authenticated) setting when at most $t$ parties are corrupted: If every honest party's input is equal to the same value $v$, then every honest party outputs $v$.
\item \emph{\textbf{Consistency:}} $\Pi$ is sabotaged (authenticated) $t$-$\mathsf{consistent}$ if the following holds in the sabotaged (authenticated) setting when at most $t$ parties are corrupted: Every honest party that outputs a value outputs the same value $v$.
\item \emph{\textbf{Termination:}} $\Pi$ is sabotaged (authenticated) $t$-$\mathsf{terminating}$ if the following holds in the sabotaged (authenticated) setting when at most $t$ parties are corrupted: Every honest party produces an output and terminates.
\end{itemize}
If $\Pi$ is sabotaged (authenticated) $t$-valid, $t$-consistent, and $t$-terminating, we say it is sabotaged (authenticated) $t$-$\mathsf{secure}$.
\end{definition}

\begin{definition}[Graded Consensus]\label{def:GC}
    In the graded consensus ($\GC$) problem, each honest party invokes $\mathsf{propose}$ with input $v_i\in V$ and outputs a tuple $(y_i,g_i)\in V \times \set{0,1}$. Let $\Pi$ be a protocol executed by parties $p_1,...,p_n$. The relevant properties attributable to $\Pi$ are as follows.
    \begin{itemize}
        \item \textbf{Validity:} We say that $\Pi$ is sabotaged (\emph{authenticated}) $t$-valid if the following holds in the sabotaged (authenticated) setting when at most $t$ parties are corrupted: If every honest party's input is equal to the same value $v$, then every honest party outputs $(v,1)$. 
        \item \textbf{Consistency:} We say that $\Pi$ is sabotaged (\emph{authenticated}) $t$-consistent if the following holds in the sabotaged (authenticated) setting when at most $t$ parties are corrupted: If an honest party outputs $(v,1)$ for some value $v$, then all honest parties output either $(v,1)$ or $(v,0)$.
        \item \textbf{Termination:} We say that $\Pi$ is sabotaged (\emph{authenticated}) $t$-terminating if the following holds in the sabotaged (authenticated) setting when at most $t$ parties are corrupted: There exists a round $r$ such that all honest parties produce an output and terminate by round $r$.
    \end{itemize}
    If a protocol $\Pi$ for GC is sabotaged (\emph{authenticated}) $t$-valid, $t$-consistent, and $t$-terminating, we say that $\Pi$ is sabotaged (\emph{authenticated}) $t$-secure.
\end{definition}

\begin{definition}[Synchronizer]\label{def:synchro}
    In the Synchronizer problem, we expose the following interface, that any party can engage with in a round of their choice.
    \begin{itemize}
    \item $\mathsf{start\_synchronization}(v \in V)$: a party starts synchronization with a value $v \in V$.

    \item output $\mathsf{synchronization\_completed}(v' \in V)$: a party completes synchronization with a value $v' \in V$.
     \end{itemize}

     We make the assumption that each honest party starts synchronization at most once. Importantly, we do \emph{not} assume that all honest parties start synchronization, i.e., it could be the case that no honest party starts synchronization.

     We consider the following properties w.r.t. to a protocol $\Pi$ in the context of the Synchronization primitive.
     \begin{itemize}
         \item \textbf{Justification:} We say that $\Pi$ has sabotaged (authenticated) $t$-justification if the following holds in the sabotaged (authenticated) setting when at most $t$ parties are corrupted: If the an honest party $p$ completes synchronization with a value $v'$ at round $r$, then there exists an honest party $q$ that started synchronization with value $v'$ at a round $r'<r$.
         \item \textbf{Totality:} We say that $\Pi$ has sabotaged (authenticated) $t$-totality if the following holds in the sabotaged (authenticated) setting when at most $t$ parties are corrupted: Let $\rho$ be the first round in which an honest party completes synchronization for some value $v$. Then, every honest party $p_i$ completes synchronization at some round $\rho_i\leq \rho+1$.
         \item \textbf{Liveness:} We say that $\Pi$ has sabotaged (authenticated) $t$-liveness if the following holds in the sabotaged (authenticated) setting when at most $t$ parties are corrupted: Suppose there exists a value $v\in V$ and a round $\rho$ such that all honest parties start synchronization with value $v$ by round $\rho$. Then, every honest party $p_i$ completes synchronization with value $v$ at some round $\rho_i\le \rho+1$.
     \end{itemize}
\end{definition}


\paragraph{Generic Compiler.} As mentioned in the introduction, we make use of black-box access to given protocols solving BA. One in the PKI setting with  authenticated $t_s$-security, and one in the information theoretic setting with $t_i$-security. We further make the assumption that along with $\Pi$, we are also given a parameter $T_\Pi$ indication the amount of rounds one must run the protocol to ensure that all honest nodes have produced an output except for with negligible probability We here formalize the notion of black-box access to a protocol in the context of our work.

\begin{definition}\label{def:blackbox_Access}
    For a given tuple $(\Pi,T_\Pi)$, where $\Pi$ is a sabotaged (authenticated) $t$-secure $\BA$ protocol and $T_\Pi$ is a parameter, black-box access implies the following guarantees for any adversary $\A$.
    \begin{itemize}
        \item If $\A$ chose the sabotaged (authenticated) setting and furthermore there exists a round $r'$ s.t. all honest parties initiate $\Pi.\mathsf{propose(v\in V)}$ at round $r'$, then except for with $\negl$ probability, by round $r'+T_\Pi$, all honest parties have produced an output from $\Pi$, furthermore, these outputs satisfy the $\BA$ conditions so long as $|\F_r|\leq t$ for all rounds $r$.
    \end{itemize}
\end{definition}
For a given protocol $\Pi$ and an adversary $\A$, we denote its expected communication (in bits) complexity under $\A$ by $\CC_\A(\Pi)$. Note that this is well defined since in all protocols discussed in this work, there is a predetermined upper bound on the number of rounds for which each party is active before halting.
We assume that values in $V$ are of size $O(\lambda)$ when $\CC$ is calculated; this is without loss of generality (see \cref{sec:conclusion}).

\section{\name}\label{sec:jaggernaut}

This section presents \name, our main protocol.
We start by introducing \name's building blocks (\Cref{subsection:building_blocks_overview}).
Then, we show how these building blocks are composed into \name (\Cref{subsection:name_implementation}).
Finally, we prove \name's security and complexity, captured in the following theorem.
\begin{theorem}\label{thrm:main_theorem}
Let $t_s,t_i$ such $2t_i+t_s<n,t_i\leq t_s<\frac{n}{2}$. Assuming black-box access to $(\Piauth,T_\auth)$ and $(\Pisab,T_\Sab)$, where $\Piauth$ is an authenticated $t_s$-secure protocol for $\BA$, and $\Pisab$ is a sabotaged $t_i$-secure protocol. Then, there exists a protocol \name which is authenticated $t_s$-secure \emph{and} sabotaged $t_i$-secure. Furthermore, for any adversary $\A$ the following holds:
\begin{enumerate}
    \item If $\Piauth$ and $\Pisab$ are secure against an adaptive adversary, then so is \name.
    \item If $\A$ chose the sabotaged setting, and the adversary is $t_i$-bounded, the expected communication (in bits) complexity of \name is $O(\CC_\A(\Piauth)+\CC_\A(\Pisab)+\lambda n^2)$.
    \item If $\A$ chose the authenticated setting, and the adversary is $t_s$-bounded, then the expected communication (in bits) complexity of \name is $O(\CC_\A(\Piauth)+\lambda n^2)$. 
    \item If $\A$ chose the authenticated setting, and the adversary is $t_s$-bounded, then if $r$ is the the first round such that all parties honest at round $r$ have produced an output from $\Piauth$ and terminated $\Piauth$, then all forever honest parties produce and output from \name and terminate \name after $r+O(1)$ rounds. 
\end{enumerate}
    
\end{theorem}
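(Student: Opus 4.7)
The plan is to prove each of the four conclusions of \Cref{thrm:main_theorem} in turn, splitting the security analysis by the adversary's choice of setting. For the \textbf{authenticated case}, I will argue as follows. In the authenticated setting, $\GCone$ provides full $t_s$-security and $\GCtwo$ provides $t_s$-validity and termination. I would start by showing validity: if all honest parties input $v$, then by $t_s$-validity of $\GCone$ all honest parties output $(v,1)$, so all honest parties feed $v$ into $\Piauth$ and, by $t_s$-validity of $\Piauth$, obtain $v_2 = v$. Since every honest party entered $\GCtwo$ with input $v$ (regardless of whether they took the $g_1 = 1$ branch), $t_s$-validity of $\GCtwo$ ensures each honest party obtains $(v,1)$. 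Hence every honest party multicasts $(\decide, v)$, every honest party receives at least $n - t_s$ such messages, and outputs $v$. Consistency and termination reduce to: $t_s$-consistency of $\GCone$ guarantees that the inputs to $\Piauth$ produce (by $t_s$-security of $\Piauth$) a common $v_2$, and $t_s$-validity of $\GCtwo$ on the common input guarantees a common $(v_2,1)$ output; every honest party then multicasts $(\decide, v_2)$, every honest party collects $n - t_s$ such messages within one round, outputs $v_2$, and halts. The Synchronizer's $t_s$-totality ensures that honest parties exit $\Piauth$ within one round of one another, which is the key to the $r + O(1)$ round overhead.

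For the \textbf{sabotaged case}, full $t_i$-security of $\GCtwo$ and $\Pisab$ is available, while $\GCone$ only offers validity and termination, and $\Piauth$ offers nothing. I would first show sabotaged validity: identical honest inputs $v$ propagate through $\GCone$ as $(v,1)$ by $t_i$-validity, so every honest party, regardless of what $\Piauth$ returns, inputs $v$ to $\GCtwo$; by $t_i$-validity of $\GCtwo$ every honest party obtains $(v,1)$, enters $\Pisab$ with $v$ by $t_i$-validity of $\Pisab$ all parties output $v$. Consistency is the subtlest step: I would split based on whether any honest party halts early from receiving $n - t_s$ $(\decide, v_3)$ messages. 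If one does, then since $n - t_s \geq t_i + 1$ among those are from honest parties that output $(v_3, 1)$ from $\GCtwo$; by $t_i$-consistency of $\GCtwo$, every honest party outputs $v_3$ (with grade $0$ or $1$) from $\GCtwo$, so every honest party enters $\Pisab$ with $v_3$, and by $t_i$-validity of $\Pisab$ outputs $v_3$. The late-halting rule with $n - t_s - t_i$ $(\decide, v)$ messages ensures that (i) an early-halting party forces $n - t_s - t_i$ honest multicasts to reach everyone, so no honest party is stuck, and (ii) if a late-halting honest party uses $n - t_s - t_i > t_i$ $(\decide, v)$ messages, at least one came from an honest party that output $(v,1)$ from $\GCtwo$, so by $t_i$-consistency all honest parties output $v$ from $\Pisab$. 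If nobody halts early, plain $t_i$-consistency and termination of $\Pisab$ carry the argument.

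For \textbf{communication complexity}, since the overhead outside $\Piauth, \Pisab, \GCone, \GCtwo$, and the Synchronizer consists only of the $(\decide, \cdot)$ multicast and the Synchronizer's wrap-up, which each cost $O(\lambda n^2)$ bits, and each of $\GCone, \GCtwo$, and the Synchronizer are $O(\lambda n^2)$ by their construction, I obtain the stated bounds; in the authenticated case $\Pisab$ is never entered past the first message (or its cost is dominated by $\Piauth$'s cost, depending on how the protocol is staged), which I would check carefully against the actual pseudocode. For \textbf{round complexity} in the authenticated case, the $r + O(1)$ claim follows by chaining: the Synchronizer finishes $O(1)$ rounds after the first honest party exits $\Piauth$ (authenticated $t_s$-totality), $\GCtwo$ runs in constant rounds, and the single $\Delta$ wait for $(\decide, \cdot)$ messages adds one more round. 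Finally, \textbf{adaptive security} is inherited modularly: each reduction I make to $\Piauth, \Pisab, \GCone, \GCtwo$ uses them in a black-box manner and the only additional messages are signed multicasts whose security does not degrade under adaptive corruption. The main obstacle I expect is the consistency proof in the sabotaged setting, specifically the interplay between early-halting and late-halting parties via the two thresholds $n - t_s$ and $n - t_s - t_i$; getting these counts exactly right given $t_s + 2t_i < n$ is where the delicate arithmetic lives, and I would lay that out in a standalone lemma before stitching the pieces together.
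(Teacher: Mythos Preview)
Your proposal follows the same overall decomposition as the paper's proof, and most of the pieces line up. There is one genuine gap worth flagging: your authenticated consistency argument asserts that all honest parties feed a ``common input'' to $\GCtwo$, but you do not justify this. Graded-consensus consistency for $\GCone$ does \emph{not} guarantee that all honest $v_1$ agree; it only binds outputs when some honest party has grade $1$. Moreover the protocol sets $v_3 \gets v_1$ when $g_1 = 1$ and otherwise $v_3 \gets v'$ (the Synchronizer output), so the two branches must be reconciled. The paper does this by a case split: if some honest party has $g_1 = 1$, then by $\GCone$-consistency all honest $v_1$ agree, by $\Piauth$-\emph{validity} $v_2 = v_1$, and by Sync's justification property no honest party can complete synchronization with any value other than $v_1$, so every honest $v_3$ equals $v_1$; if no honest party has $g_1 = 1$, then by $\Piauth$-\emph{consistency} all honest $v_2$ agree and (again via Sync justification) every honest $v_3$ equals $v_2$. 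Without this split and without invoking Sync justification, the claim that $\GCtwo$ receives a common input is unsupported. Relatedly, the round bound in item~4 relies on Sync's $t_s$-\emph{liveness} (completion relative to when all honest parties have started), not only totality as you state.

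A smaller point: for sabotaged validity you route the argument through $\Pisab$, but in the protocol all honest parties multicast $(\decide, v)$, receive $n - t_i \geq n - t_s$ of them, and decide via condition C1 before $\Pisab$ ever starts. Your argument still gives the right output (once you also rule out C1/C2 firing on a wrong value, which you omit), but the C1 route is exactly what the paper uses in both validity proofs and is precisely what resolves your hedge about the authenticated communication bound: honest parties halt via C1 before round $T_{\max}$, so $\Pisab$ is never executed and its cost does not appear.
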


\subsection{Building Blocks: Overview} \label{subsection:building_blocks_overview}
In this subsection, we summarise the building blocks we use to build \name on top of Byzantine agreement.
\subsubsection{Authenticated Graded Consensus with Sabotaged Validity.}

The authenticated graded consensus with fallback validity primitive exposes the following interface:
\begin{itemize}
    \item Input $\mathsf{propose}(v \in V)$: A party proposes a value $v \in V$.

    \item Output: $(v' \in V, g' \in \{ 0, 1\})$: A party outputs a value $v'$ with a binary grade $g'$. Usually indicated by a left arrow $\leftarrow$.
\end{itemize}
We assume that all honest parties propose exactly once and they all do that simultaneously (i.e., in the same round). In this setting, we design a protocol $\GCone$, that satisfies the following properties w.r.t. the graded consensus primitive (See \cref{def:GC}) for any $t_s+2t_i<n,t_i\leq t_s<\frac{n}{2}$: Authenticated $t_s$-secure, sabotaged $t_i$-valid, and sabotaged $t_i$-terminating.



\paragraph{Complexity.}
\name utilizes an implementation of the primitive that exchanges $O(\lambda n^2)$ bits and terminates in $T_1=O(1)$ rounds.
We relegate our implementation of the primitive to \cref{ssec:GC_PKI_star}.

\subsubsection{Sabotaged Graded Consensus with Authenticated Validity.}

The sabotaged graded consensus with authenticated validity primitive exposes the following interface:
\begin{itemize}
    \item Input $\mathsf{propose}(v \in V)$: A party proposes a value $v \in V$.

    \item Output $(v' \in V, g' \in \{ 0, 1 \})$: A party outputs a value $v'$ with a binary grade $g'$. Usually indicated with a left arrow $\leftarrow$.
    
\end{itemize}
All honest parties propose exactly once and they do so within one round of each other.
Therefore, we do \emph{not} assume that all honest parties propose in the same round. For this setting, we design a protocol $\GCtwo$ with the following properties w.r.t. to the $\GC$ primitive (See \cref{def:GC}) for any $t_s+2t_i<n,t_i\leq t_s<\frac{n}{2}$: Sabotaged $t_i$-secure and authenticated $t_s$-valid and $t_s$-terminating.



\paragraph{Complexity.}
In \name, we employ an implementation of the primitive that exchanges $O(\lambda n^2)$ bits and terminates in $T_2=O(1)$ rounds.
The implementation can be found in \Cref{ssec:GC_Sab_star}.

\subsubsection{Synchronizer.}
The primitive exposes the following interface:
\begin{itemize}
    \item input $\mathsf{start\_synchronization}(v \in V)$: A party starts synchronization with a value $v \in V$.

    \item output $\mathsf{synchronization\_completed}(v' \in V)$: A party completes synchronization with a value $v' \in V$.
\end{itemize}
We design a protocol $\Sync$, that has the following properties in the context of the Synchronizer primitive (See \cref{def:synchro}) for any $t_s+2t_i<n,t_i\leq t_s<\frac{n}{2}$.
\begin{itemize}
    \item Sabotaged $t_i$-totality.
    \item Authenticated $t_s$-justification, $t_s$-totality, $t_s$-liveness.
\end{itemize}




\paragraph{Complexity.}
We implement the synchronizer primitive with $O(\lambda n^2)$ exchanged bits. See \cref{ssec:synchronizer} for more details.

\subsection{\name's Implementation \& Proof} \label{subsection:name_implementation}

\name's implementation is provided in \Cref{fig:name}, whereas its visual dedication can be found in \Cref{fig:name_visual}. For clarity, we proceed with a written exposition on the stages of the protocol. On a high level, as seen in \cref{fig:name_visual}, the protocol divided into five steps.

First, all parties propose their input into $\GCone$ -- recall that this is authenticated $t_s$-secure and sabotaged $t_i$-valid.
The output of $\GCone$ is then fed as input into $\Piauth$.
Note that here is where the synchronizer $\Sync$ comes into play, since in the the sabotaged case, or an early stopping/randomized protocol $\Piauth$ in the authenticated setting, the adversary can cause significant gaps between the rounds in which honest parties produce an output and move on from $\Piauth$.
$\Sync$ maintains that honest parties exit $\Piauth$ at most 1 round apart from one another. The input to $\GCtwo$ is then decided depending on whether $g_1=1$, as depicted in the figure.

To maintain early stopping in the authenticated case, each party that has $g_2=1$ participates in an early stopping phase in order to detect pre-existing agreement in $\GCtwo$. During that phase, parties with $g_2=1$ multicast $\decide(v_4)$ messages for their output from $\GCtwo$. If a quorum of $n-t_s$ $\decide(v)$ messages is received for some value, then a party decides $v$ and halts. Otherwise, after sufficient time has passed, all honest parties that haven't halted start executing the sabotaged BA protocol $\Pisab$ in the same round. At its conclusion, parties produce an output based on conditions C2 and C3, as seen in \cref{fig:name}.

\begin{figure}[ht!] 
\centering
\fbox{
\parbox{0.95\textwidth}{
\centering \name \\ (Pseudocode for a party $p_i$) \\
\textbf{Constants:} $T_{\max} = T_1 + T_\auth + T_2 + 1$. \\
\textbf{Initialization:} $v_i = p_i$'s proposal, $\output_1 = \output_2 = \false$.
\begin{enumerate}
\item Let $(v_1, g_1) \gets \GCone.\mathsf{propose}(v_i)$. \textcolor{blue}{// This step takes exactly $T_1$ rounds.}
\item Let $v_2 \gets \Piauth.\mathsf{propose}(v_1)$.
If $\Piauth$ does not terminate after $T_\auth$ rounds, let $v_2 \gets \bot$. \textcolor{blue}{// This step takes at most $T_\auth$ rounds. However, if $\mathcal{A}$ is computationally bounded, it might take fewer than $T_\auth$ rounds.}
\item Invoke $\Sync.\mathsf{start\_synchronization}(v_2)$.
\item Upon (1) $\Sync.\mathsf{synchronization\_completed}(v')$ is triggered, or (2) $T_\auth$ rounds elapsed, party $p_i$ starts step 4.
If $g_1 = 1$, let $v_3 \gets v_1$.
Else if $\Sync.\mathsf{synchronization\_completed}(v')$ is triggered, let $v_3 \gets v'$.
Else, let $v_3 \gets v_1$.
\item Let $(v_4, g_4) \gets \GCtwo.\mathsf{propose}(v_3)$. \textcolor{blue}{// This step takes exactly $T_2$ rounds.}
\item Set $\output_1 = \true$.
\item If $g_4 = 1$, multicast $(\decide, v_4)$.
\item At round $T_{\max}$, let $\mathsf{output}_1 = \false$ and let $v_\Sab \gets \Pisab.\mathsf{propose}(v_4)$.
If $\Pisab$ does not terminate after $T_\Sab$ rounds, let $v_\Sab \gets \bot$.
\item Set $\output_2 = \true$.
\end{enumerate}
\textbf{Decision:}
\begin{itemize}
\item[C1.] If $\output_1 = \true$ and received $(\decide, v_4)$ from $n - t_s$ distinct parties: decide $v_4$ and halt.
\item[C2.] If $\output_2 = \true$ and received for some $v$ $(\decide, v) $ from $n - t_s - t_i$ distinct parties: decide $v$ and halt.
\item[C3.] If $\output_2 = \true$: decide $v_\Sab$ and halt.
\end{itemize}
}
}
\caption{BA with crypto-agnostic security for $2 t_i + t_s < n$ given 
1) an authenticated $t_s$-secure BA protocol $\Piauth$; 
2) a sabotaged $t_i$-secure BA protocol $\Pisab$;
3) an authenticated $t_s$-secure and sabotaged $t_i$-valid graded consensus protocol $\GCone$; and 
4) a sabotaged $t_i$-secure and authenticated $t_s$-valid graded consensus protocol $\GCtwo$. 
}
\label{fig:name}
\end{figure}

\begin{figure}
    \centering{
    \includegraphics[width=.98\textwidth, left]{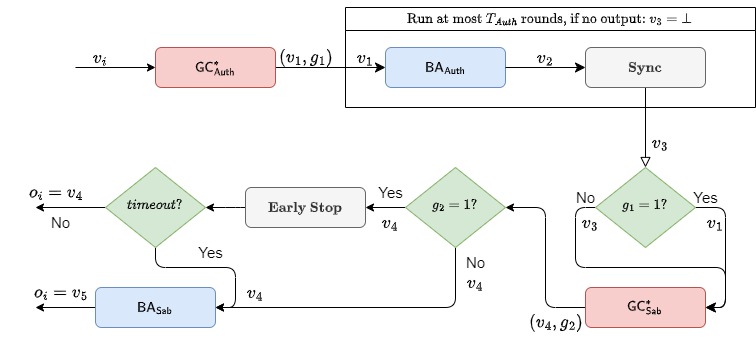}}
    \caption{Overview of the structure of the main protocol for each party. Beginning with input insertion at the top left with $v_i$, and ending with output production at the bottom left. The values next to arrows indicate the outputs and inputs produced and delivered into procedures.}
    \label{fig:name_visual}
\end{figure}

\begin{proof}[of \cref{thrm:main_theorem}]

\paragraph{\name's security in the sabotaged setting.}
We start by proving that \name is $t_i$-$\mathsf{valid}$ in the sabotaged setting.

\begin{lemma} [$t_i$-validity] \label{lemma:main_no_sig_validity}
\name (\Cref{fig:name}) is $t_i$-$\mathsf{valid}$ in the sabotaged setting.
\end{lemma}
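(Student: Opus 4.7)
The plan is to trace through \name (\Cref{fig:name}) under the hypothesis that every honest party's input is the same value $v$ in the sabotaged setting with at most $t_i$ corruptions, and show that every honest party eventually decides $v$. The backbone of the argument is to apply, in sequence, the validity properties of the two graded consensus gadgets and then verify that the three decision rules C1--C3 can only fire for $v$.

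First, I invoke sabotaged $t_i$-validity of $\GCone$: since every honest party proposes $v$ in step~1, every honest party obtains $(v_1, g_1) = (v, 1)$. In step~4, the branch $g_1 = 1$ is taken by every honest party, so $v_3 = v_1 = v$ for all honest parties regardless of what $\Piauth$ or $\Sync$ produced (this is exactly why the step-4 logic prefers $v_1$ whenever $g_1 = 1$). Next, sabotaged $t_i$-validity of $\GCtwo$ (which holds since $\GCtwo$ is sabotaged $t_i$-secure) gives every honest party $(v_4, g_4) = (v, 1)$ in step~5. Because $g_4 = 1$, every honest party then multicasts $(\decide, v)$ in step~7.

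Now I turn to the decision rules. Using $2t_i + t_s < n$, note $n - t_s > t_i$ and $n - t_s - t_i > t_i$, so the thresholds in C1 and C2 both strictly exceed $t_i$. Since honest parties only multicast $(\decide, v)$, any value $v' \neq v$ can be backed by at most $t_i$ decide-messages (all from corrupt parties), which is insufficient to trip either C1 or C2 for $v'$. On the other hand, $v$ itself is supported by at least $n - t_i \geq n - t_s$ honest decide-messages, so both thresholds are met. For C3, any honest party that reaches step~8 feeds $v_4 = v$ into $\Pisab$, and sabotaged $t_i$-validity of $\Pisab$ yields $v_\Sab = v$. Thus whichever of C1, C2, C3 triggers, the decided value is $v$.

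The last thing to check is that \emph{every} honest party triggers \emph{some} decision rule. For a party that halts via C1 before round $T_\max$ we are done; otherwise, at round $T_\max$ the flag $\output_1$ is reset to $\false$ and the party enters $\Pisab$, and upon completion $\output_2$ becomes $\true$, at which point the $\geq n - t_i \geq n - t_s - t_i$ honest $(\decide, v)$ messages already received (all multicast no later than the start of step~8, hence delivered by the time $\Pisab$ terminates) cause C2 to fire; failing that, C3 fires and returns $v_\Sab = v$. The main subtlety I expect to have to spell out carefully is the timing around the $\output_1 \to \false$ transition at round $T_\max$: I must argue that the check of C1 between steps~7 and~8 and, independently, the later check of C2/C3 cannot be `missed', which follows because the conditions are continuously evaluated and honest decide-counts are monotone once received.
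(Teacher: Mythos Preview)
Your proof is correct and follows the same approach as the paper: propagate $v$ through $\GCone$ and $\GCtwo$ via their sabotaged validity, conclude that every honest party multicasts $(\decide, v)$, and then argue the decision rules can only yield $v$. The paper's argument is slightly leaner in that it simply observes every honest party receives $n - t_i \geq n - t_s$ honest $(\decide, v)$ messages (and cannot receive $n - t_s$ for any $v' \neq v$), so every honest party decides $v$ via C1 directly; your fallback analysis of C2 and C3 is sound but unnecessary here.
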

\begin{proof}
Suppose all honest parties propose the same value $v$.
Due to the validity property of $\GCone$ in the sabotaged setting, all honest parties decide $(v, 1)$ from $\GCone$.
Hence, all honest parties propose $v$ to $\GCtwo$.
The fact that $\GCtwo$ satisfies validity in the sabotaged setting proves that all honest parties decide $(v, 1)$ from $\GCtwo$.
Thus, all honest parties multicast a $(\mathsf{decide}, v)$ message.
Hence, all honest parties receive $n - t_i$  such messages.
(Note that no honest party receives $n - t_s$ $\mathsf{decide}$ message for any value $v' \neq v$ since that would imply that an honest party sends a $\mathsf{decide}(v')$ message, which cannot occur.)
As $n - t_i \geq n - t_s$ (since $t_i \leq t_s$), all honest parties decide $v$ according to decision condition C1.
\qed
\end{proof}
Next, we prove that \name is $t_i$-$\mathsf{consistent}$ in the sabotaged setting.

\begin{lemma} [$t_i$-consistency]
\name (\Cref{fig:name}) is $t_i$-$\mathsf{consistent}$ in the sabotaged setting.
\end{lemma}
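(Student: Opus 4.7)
The plan is a case analysis on which of the three decision conditions C1, C2, C3 each honest party invokes, showing that all honest outputs coincide. The linchpin is to propagate a common value from $\GCtwo$ into $\Pisab$ via graded consistency and validity of those subprotocols in the sabotaged setting.

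First, I will handle the quorum arithmetic. If an honest party decides $v$ via C1, it has received $(\decide, v)$ from $n - t_s$ distinct parties, at most $t_i$ of whom are corrupt; hence at least $n - t_s - t_i \geq t_i + 1 \geq 1$ honest senders exist. If a party decides $v$ via C2, its quorum is $n - t_s - t_i$, leaving at least $n - t_s - 2t_i \geq 1$ honest senders (using $2t_i + t_s < n$). Since an honest party only multicasts $(\decide, v)$ when its $\GCtwo$ output was $(v, 1)$, in either C1 or C2 at least one honest party output $(v, 1)$ from $\GCtwo$.

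Next, I apply the sabotaged $t_i$-consistency of $\GCtwo$: as soon as one honest party outputs $(v, 1)$, every honest party outputs either $(v, 0)$ or $(v, 1)$ from $\GCtwo$. In particular, the ``high-grade'' value is uniquely pinned down, so two honest parties deciding via C1 or C2 cannot disagree. Moreover, every honest party that reaches step 8 proposes exactly this $v$ to $\Pisab$, and the sabotaged $t_i$-validity of $\Pisab$ then forces $v_\Sab = v$ for every such party.

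Finally, I combine the cases. If some honest party decides $v$ via C1 or C2 and another decides $v_\Sab$ via C3, the validity argument above yields $v_\Sab = v$. If two honest parties both decide via C3, sabotaged $t_i$-consistency of $\Pisab$ gives them the same $v_\Sab$. The only subtlety, and what I expect to be the main obstacle to present cleanly, is that a party may halt via C1 before running $\Pisab$, so I must take care to argue that the remaining honest parties entering $\Pisab$ still unanimously propose $v$; this is exactly what sabotaged $t_i$-consistency of $\GCtwo$ guarantees, and it is the step that ties the chain together.
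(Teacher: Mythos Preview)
Your overall strategy---pinning down a unique value through $\GCtwo$'s graded consistency and then propagating it into $\Pisab$---matches the paper's, and your quorum arithmetic is correct. However, the resolution you offer for the ``halting via C1'' subtlety does not actually close the gap.

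You argue that if some honest party halts via C1 before step~8, then the \emph{remaining} honest parties still unanimously propose $v$ to $\Pisab$, and you then invoke sabotaged $t_i$-validity of $\Pisab$ to conclude $v_\Sab = v$. But $\Pisab$ is only assumed $t_i$-secure in executions where all $n$ parties participate with at most $t_i$ of them corrupt. Honest parties that have already halted are, from $\Pisab$'s perspective, indistinguishable from silent Byzantine parties, so the effective fault count may exceed $t_i$ and neither validity nor consistency of $\Pisab$ is guaranteed. The same issue contaminates your C3--versus--C3 case: you appeal to $\Pisab$'s consistency without first ruling out that some \emph{other} honest party halted via C1.

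The paper avoids this by structuring the case split differently. It first shows that if \emph{any} honest party decides via C1, then \emph{every} honest party decides via C1 or C2, so no honest party ever reaches C3 in this branch. The observation you are missing is that the $n - t_s$ $(\decide, v_4)$ messages that trigger C1 include at least $n - t_s - t_i$ honest multicasts; hence every honest party receives at least $n - t_s - t_i$ such messages and satisfies C2's threshold once $\output_2$ becomes true---regardless of what $\Pisab$ outputs. Only after C1 has been eliminated does the paper invoke $\Pisab$'s validity (for the C2 branch) or consistency (for the pure C3 branch), and at that point all honest parties are guaranteed to participate in $\Pisab$. Adding this one step to your argument would close the gap cleanly.
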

\begin{proof}
Suppose that at least one honest party outputs due to C1, i.e., receives $(\decide, v_4)$ from $n - t_s$ distinct parties.
Since $n - t_s - t_i > t_i$, at least one honest party must have output $(v_4, 1)$ from $\GCtwo$, and thus by $t_i$-consistency of $\GCtwo$, all honest parties output the same value $v_4$, and in particular no honest party multicasts $(\decide, v_4')$ for $v_4' \neq v_4$.
In addition, at least $n - t_s - t_i$ honest parties must have multicast $(\decide, v_4)$.
Thus all honest parties will receive $(\decide, v_4)$ from at least $n - t_s - t_i$ distinct parties, and not for any other value $v_4' \neq v_4$ since $n - t_s - t_i > t_i$.
Therefore all honest parties will output $v_4$ either due to C1 or C2.

Suppose that no honest party outputs due to C1 and at least one outputs due to C2, i.e., due to receiving $(\decide, v)$ for some $v$ from $n - t_s - t_i$ distinct parties.
Then at least one honest party must have multicast $(\decide, v)$, since $n - t_s - t_i > t_i$.
By the $t_i$-consistency of $\GCtwo$, all honest parties will propose the same $v$ to $\Pisab$.
By the $t_i$-validity of $\Pisab$, all honest parties will output the same $v_\Sab = v$.
Thus all honest parties will output either due to C2 or C3 with the same value $v$.

Finally, supposing no honest party outputs due to C1 or C2, all honest parties output the same value $v$ from $\Pisab$ due to the $t_i$-consistency of $\Pisab$ and then decide it due to C3.
\qed
\end{proof}
Finally, we prove that \name is $t_i$-$\mathsf{terminating}$ in the sabotaged setting.

\begin{lemma} [$t_i$-termination]
\name (\Cref{fig:name}) is $t_i$-$\mathsf{terminating}$ in the sabotaged setting.
\end{lemma}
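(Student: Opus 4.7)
The plan is to walk through \name sequentially in the sabotaged setting and show that every honest party eventually sets $\output_2 = \true$, at which point decision rule C3 fires and the party halts. The argument rests on the fact that each sub-protocol either terminates by its sabotaged $t_i$-termination guarantee or is explicitly capped by a timeout with a $\bot$ fallback, so no stage can stall.

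First, step 1 terminates in exactly $T_1$ rounds by the sabotaged $t_i$-termination of $\GCone$. For steps 2--4, \name deliberately does not rely on liveness of $\Piauth$ or $\Sync$ in the sabotaged setting: step 2 is capped at $T_\auth$ rounds with fallback $v_2 \gets \bot$ (crucial since $\Piauth$ is only guaranteed to terminate in the authenticated world), and the ``$T_\auth$ rounds elapsed'' timeout in step 4 bounds the combined duration of steps 2--4 by $T_\auth$ rounds regardless of whether $\Sync$ ever completes (recall $\Sync$ only has $t_i$-totality, not liveness, in the sabotaged setting). Step 5 then terminates in $T_2$ rounds by the sabotaged $t_i$-termination of $\GCtwo$. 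Consequently, every honest party executes steps 6 and 7 and reaches step 8 at round $T_{\max} = T_1 + T_\auth + T_2 + 1$.

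At round $T_{\max}$, every honest party simultaneously invokes $\Pisab.\mathsf{propose}(v_4)$. By the black-box access guarantee of \Cref{def:blackbox_Access} applied to the sabotaged $t_i$-secure $\Pisab$, all honest parties produce an output, or fall back to $v_\Sab \gets \bot$, within at most $T_\Sab$ rounds (except with negligible probability). Each honest party then sets $\output_2 = \true$ in step 9, which immediately triggers decision condition C3, so the party decides $v_\Sab$ and halts.

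The only subtlety, and the place where some care is needed, is ensuring that all honest parties invoke $\Pisab$ in the \emph{same} round, which is required to invoke the black-box guarantee on $\Pisab$. This holds because the timeout in step 4 together with the deterministic round counts of steps 1 and 5 force every honest party to advance in lockstep and reach step 8 at exactly round $T_{\max}$; this alignment is precisely where the explicit timeouts in the protocol earn their keep. Beyond this observation, the proof is a routine chase through the sub-protocol termination guarantees.
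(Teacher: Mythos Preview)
Your proof is correct and follows the same approach as the paper's, which simply notes that all honest parties decide by round $T_{\max} + T_\Sab$ at the latest; you have unpacked this one-line observation into a step-by-step trace through the explicit timeouts. One minor remark: your ``subtlety'' paragraph about simultaneous invocation of $\Pisab$ is stronger than what termination alone requires, since the explicit $T_\Sab$ fallback in step~8 already guarantees each party sets $\output_2 = \true$ regardless of whether $\Pisab$ actually outputs, but you correctly note this fallback yourself.
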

\begin{proof}
\name trivially terminates as all honest parties decide in round $T_{\max} + T_\Sab$ (at the latest).
\qed
\end{proof}

\paragraph{\name's security in the authenticated setting.}
First, we prove that $\name$ is $t_s$-$\mathsf{valid}$ in the authenticated setting.

\begin{lemma} [$t_s$-validity] \label{lemma:main_sig_validity}
\name (\Cref{fig:name}) is $t_s$-$\mathsf{valid}$ in the authenticated setting. Furthermore, all honest parties output according to C1.
\end{lemma}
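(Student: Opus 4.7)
}
The plan is to propagate the common input value $v$ through each sub-primitive of \name using their authenticated $t_s$-validity guarantees, and then argue a short timing bound to conclude that every honest party decides via condition C1 strictly before round $T_{\max}$.

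First, I would apply the authenticated $t_s$-validity of $\GCone$: since every honest party proposes $v$, every honest party outputs $(v_1, g_1) = (v, 1)$. Next, I would invoke the authenticated $t_s$-validity of $\Piauth$: every honest party proposes $v_1 = v$ to $\Piauth$ in the same round, so every honest party produces $v_2 = v$ (within $T_\auth$ rounds, since $\Piauth$ is $t_s$-secure). At step 4, the crucial observation is that $g_1 = 1$ at every honest party, so $v_3$ is unconditionally set to $v_1 = v$, making the synchronizer's output value irrelevant to value propagation (we will still need the synchronizer for timing). Then, by the authenticated $t_s$-validity of $\GCtwo$, every honest party outputs $(v_4, g_4) = (v, 1)$, hence sets $\output_1 = \true$ and multicasts $(\decide, v)$. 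Since at least $n - t_s$ honest parties do so, every honest party receives $n - t_s$ copies of $(\decide, v)$, so C1 will be satisfied with value $v$ the first time it is checked while $\output_1 = \true$.

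The main obstacle is the timing: we must show that, for each honest party, the $n - t_s$ copies of $(\decide, v)$ are received and the decision condition C1 is evaluated \emph{before} round $T_{\max} = T_1 + T_\auth + T_2 + 1$, since $\output_1$ is flipped back to $\false$ at round $T_{\max}$ (step 8) and from then on only C2 and C3 can fire. To bound the skew, I would combine the authenticated $t_s$-liveness of $\Sync$ (all honest parties complete synchronization within one round of one another, and in particular no later than the moment $T_\auth$ rounds have elapsed from the start of step 2) with the fact that $\GCone$ takes exactly $T_1$ rounds and $\GCtwo$ takes exactly $T_2$ rounds. A round count then shows that every honest party finishes $\GCtwo$ and multicasts $(\decide, v)$ no later than round $T_1 + T_\auth + T_2$, so the $\decide$ messages arrive by round $T_{\max}$, while $\output_1$ is still $\true$.

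Finally, I would conclude that C1 fires with value $v$ at every honest party, so each halts deciding $v$, and in particular no honest party ever evaluates C2 or C3 or reaches step 8. This yields both parts of the statement: authenticated $t_s$-validity and the refinement that every honest party outputs via C1. The argument nowhere uses $\Pisab$, which is consistent with the fact that validity in the authenticated setting should be obtained entirely from the authenticated building blocks $\GCone$, $\Piauth$, $\Sync$, and $\GCtwo$.
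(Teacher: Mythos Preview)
Your proposal is correct and follows essentially the same approach as the paper: propagate the common value $v$ through $\GCone$ (yielding $(v,1)$), use $g_1=1$ to force $v_3=v$ regardless of what happens in $\Piauth/\Sync$, then apply authenticated $t_s$-validity of $\GCtwo$ to get $(v,1)$, and finally argue that the resulting $n-t_s$ $(\decide,v)$ messages trigger C1 at every honest party. The paper's proof is terser---it skips your $\Piauth$ step entirely (since $g_1=1$ renders it moot for value propagation, as you yourself note) and defers the final C1 argument to the sabotaged-validity lemma---whereas you spell out the timing bound showing the $\decide$ messages arrive while $\output_1$ is still $\true$, a point the paper leaves implicit.
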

\begin{proof}
Let all honest parties propose the same value $v$.
By the $t_s$-security of $\GCone$, all honest parties output $(v, 1)$.
Then as above, since all honest parties output $g_1 = 1$, all honest parties set $v_3 = v$.
By the $t_s$-validity with termination of $\GCtwo$, all parties output $(v, 1)$.
Validity and all parties outputting according to C1 then follows as in the proof of~\cref{lemma:main_no_sig_validity}
\qed
\end{proof}
Next, we prove that \name is $t_s$-$\mathsf{consistent}$ in the authenticated setting.

\begin{lemma} [$t_s$-consistency]\label{lemma:main_sig_consistency}
\name (\Cref{fig:name}) is $t_s$-$\mathsf{consistent}$ in the authenticated setting. Furthermore, all honest parties output according to C1.
\end{lemma}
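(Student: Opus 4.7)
The plan is to show that in the authenticated setting, every honest party inputs the \emph{same} value into $\GCtwo$, so that by authenticated $t_s$-validity of $\GCtwo$ every honest party obtains $g_4 = 1$ on the same value $v$ and multicasts $(\decide, v)$. From this, condition C1 will trigger for all honest parties before round $T_{\max}$, yielding both consistency and the ``output via C1'' clause. The novelty relative to \Cref{lemma:main_sig_validity} is that inputs to $\GCone$ need not all coincide, so I cannot directly invoke validity of $\GCone$; instead I chain together consistency of $\GCone$, validity of $\Piauth$, and justification of $\Sync$ to propagate a single common value through the pipeline.

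Concretely, I would proceed as follows. First, invoke authenticated $t_s$-consistency of $\GCone$ to fix a value $v$ such that every honest party outputs $(v, g_1)$ for some (possibly varying) grade. Since every honest party then proposes the same $v$ to $\Piauth$, authenticated $t_s$-validity of $\Piauth$ yields $v_2 = v$ for all honest parties (with the global termination guarantee of black-box access handling the negligible-probability failure). Next, I argue that whenever an honest party completes synchronization with some value $v'$, authenticated $t_s$-justification of $\Sync$ forces $v'$ to equal some honest party's sync input, namely $v_2 = v$. Then I perform the case split in step~4 of \Cref{fig:name}: if $g_1 = 1$, the party sets $v_3 = v_1 = v$; if $g_1 = 0$ and sync completed, the party sets $v_3 = v' = v$; if $g_1 = 0$ and sync did not complete before the $T_\auth$ timeout, the party sets $v_3 = v_1 = v$. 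In all three branches, $v_3 = v$. Applying authenticated $t_s$-validity of $\GCtwo$ to these identical inputs, every honest party outputs $(v, 1)$, so every honest party multicasts $(\decide, v)$.

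The final step is to confirm that C1 fires on $v$ before step~8 resets $\output_1$. All honest parties multicast $(\decide, v)$ by round $T_1 + T_\auth + T_2$, so within one additional round every honest party has received at least $n - t_s$ copies of $(\decide, v)$; since $T_{\max} = T_1 + T_\auth + T_2 + 1$, $\output_1$ is still $\true$ at the moment of the check, and C1 triggers on $v$. I also note that C1 cannot trigger on any $v^\star \neq v$, since such a message would have to come from corrupt parties alone and there are only $t_s < n - t_s$ of them. The main obstacle is the step-4 case analysis: because honest parties may exit $\Piauth$ at different times and because sync may or may not complete before the $T_\auth$ timeout, one must carefully combine justification of $\Sync$ with validity of $\Piauth$ to collapse all three branches to the same value $v$. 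Once that collapse is established, the rest of the argument follows by direct application of the building-block guarantees and simple counting.
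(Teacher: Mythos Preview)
Your argument contains a genuine gap at its first step. You write: ``invoke authenticated $t_s$-consistency of $\GCone$ to fix a value $v$ such that every honest party outputs $(v, g_1)$ for some (possibly varying) grade.'' But graded-consensus consistency (\Cref{def:GC}) is a \emph{conditional} guarantee: it only says that if some honest party outputs $(v,1)$, then all honest parties output $(v,\cdot)$. When every honest party's grade from $\GCone$ is $0$, consistency imposes no constraint whatsoever on the values, and different honest parties may hold different $v_1$'s. Your downstream reasoning then collapses: the inputs to $\Piauth$ need not coincide, so $t_s$-\emph{validity} of $\Piauth$ does not apply, and your third branch ``$g_1=0$ and sync did not complete, so $v_3=v_1=v$'' is unjustified since $v_1$ is not common.

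The paper's proof repairs exactly this by case-splitting on whether some honest party has $g_1=1$. If yes, graded consistency of $\GCone$ does give a common $v_1$, and your argument (validity of $\Piauth$, justification of $\Sync$, then all $v_3=v_1$) goes through. If no honest party has $g_1=1$, the paper instead invokes $t_s$-\emph{consistency} of $\Piauth$ to obtain a common $v_2$; since every honest party has $g_1=0$, none overrides with $v_1$, and justification of $\Sync$ forces any completed-sync value to equal this common $v_2$, so all honest parties set $v_3=v_2$. Your proposal can be salvaged by inserting this case split and, in the all-grades-zero branch, replacing the appeal to validity of $\Piauth$ with consistency of $\Piauth$.
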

\begin{proof}
We first argue that all honest parties input the same value $v_3$ to $\GCtwo$. 
\begin{itemize}
    \item Suppose first that at least one honest party outputs $g_1 = 1$ from $\GCone$. Then by the $t_s$-consistency of $\GCone$, all honest parties will output the same $v_1$. By the $t_s$-validity of $\Piauth$, all honest parties output $v_2 = v_1$ from $\Piauth$, and thus no honest party can advance from $\Sync$ with any value $v'\neq v_1$. Thus $v_3=v_1$ is the same for all honest parties.
    \item Otherwise, no party outputs $g_1 = 1$ from $\GCone$. By $t_s$-consistency of $\Piauth$, all honest parties output the same $v_2$ (and thus no honest party can advance from $\Sync$ with any value $v'\neq v_2$), and by construction all set $v_3 = v_2$ (i.e., do not override $v_3$ with $v_1$).
\end{itemize}
Then, by the $t_s$-validity of $\GCtwo$, all honest parties output $(v_4, 1)$ from $\GCtwo$, multicast $(\decide, v_4)$, and by the same reasoning as for the proof of~\cref{lemma:main_sig_validity}, all honest parties will output due to C1 (and in particular never execute $\Pisab$).
\qed
\end{proof}
Lastly, we observe that \name is $t_s$-$\mathsf{terminating}$ in the authenticated setting as all parties terminate without running $\Pisab$.


\paragraph{Complexity.} We now attend to the complexity analysis of \name. \cref{lemma:main_sig_validity} and \cref{lemma:main_sig_consistency} implies that in the authenticated setting, all honest parties produce an output and halt after $\GCtwo$, without running $\Pisab$. Since $T_2=O(1)$, and by the Totality and Liveness properties of $\Sync$, we get that if $r$ is first the round by which all honest parties produced an output from $\Piauth$, then all forever honest parties produce an output and halt after at most $r+O(1)$ rounds, as required. Furthermore, since no honest party executes $\Pisab$, the expected communication (in bits) complexity of \name is $O(\CC_\A(\Piauth)+\lambda n^2)$, as required. In the sabotaged setting, notice that all communication aside from $\Piauth$ and $\Pisab$ is bounded by $O(\lambda n^2)$ bits.
Thus in the sabotaged setting, the expected communication (in bits) complexity of \name is $O(\CC_\A(\Piauth)+\CC_A(\Pisab)+\lambda n^2)$, as required. 

\noindent Lastly, notice that aside from (potentially) $\Piauth$ and $\Pisab$, the protocol \name is deterministic, and thus if $\Piauth,\Pisab$ are secure against an adaptive adversary, then so is \name. This concludes the proof of \cref{thrm:main_theorem}.
\qed
\end{proof}

\subsection{Corollaries}\label{ssec:corols}

Now that we have proven \cref{thrm:main_theorem}, we can instantiate $\Piauth$ and $\Pisab$ with concrete protocols in order to obtain concrete crypto-agnostic protocols.
Before we state those corollaries, one point needs to be addressed.

\paragraph{Bit complexity in the sabotaged setting.} In order to maximize the generality of our results, the only assumption we made about $\Piauth,\Pisab$, as explained in \cref{def:blackbox_Access}, is that we are provided with round complexity bounds for these protocols. In particular, we are given no such guarantee for bit complexity. As such, in the \emph{sabotaged} setting, when the parties run $\Piauth$, which is an \emph{authenticated} $t_s$-secure $\BA$ protocol, we have no a-priori upper bound for the amount of bits sent by honest parties during the execution, hence why we define the bit complexity of a protocol w.r.t. a particular adversary.
When considering concrete protocols, however, each party can infer a bound from the description of the protocol on the number of messages it has to send, and if is exceeded, an honest party can simply halt $\Piauth$ and move onto $\Sync$ with input $\bot$.
The corollaries we state assume that such a modification was made to the final protocol.

\paragraph{$\name_\mathsf{det}$.} For a deterministic protocol, we instantiate $\Piauth$ with the protocol of \cite{momose2021optimal} modified using the techniques of~\cite{lenzen2022recursive} to achieve $O(f)$ round complexity, and $\Pisab$ with the protocol of \cite{berman1992bit}, to obtain the following.

\begin{corollary}\label{corol:protocol_deterministic}
    Let $t_s,t_i$ s.t. $t_s+2t_i<n,t_i\leq t_s<\frac{n}{2}$. There exists a deterministic authenticated $t_s$-secure, sabotaged $t_i$-secure protocol $\name_{\mathsf{det}}$ solving the BA problem with the following properties.
    \begin{itemize}
        \item In the authenticated setting, $\name_{\mathsf{det}}$ has $O(\lambda n^2)$ bit complexity and $O(f)$ round complexity, where $f \leq t_s$ is number of \emph{actual} corruptions.
        \item In the sabotaged setting, $\name_{\mathsf{det}}$ has $O(\lambda n^2)$ bit complexity and $O(n)$ round complexity.
    \end{itemize}
\end{corollary}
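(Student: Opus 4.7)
The plan is to instantiate \cref{thrm:main_theorem} with the concrete choices $\Piauth$ = Momose-Ren~\cite{Momose2021} modified via the Lenzen-Sheng techniques of~\cite{lenzen2022recursive}, and $\Pisab$ = Berman-Garay-Perry~\cite{berman1992bit}, and then verify that the bit and round complexity claims follow. Concretely, the modified Momose-Ren protocol is authenticated $t_s$-secure with $t_s < n/2$, has $O(\lambda n^2)$ worst-case bit complexity, and enjoys early stopping in $O(f)$ rounds when $f \le t_s$ parties are actually corrupted; the Berman-Garay-Perry protocol is $t_i$-secure in the (setup-free, information-theoretic) sabotaged setting for $t_i < n/3$, with $O(n^2)$ bit complexity and $O(n)$ round complexity. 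Hence the hypotheses of \cref{thrm:main_theorem} are satisfied for any $t_s, t_i$ with $t_i \le t_s < n/2$ and $t_s + 2t_i < n$, and the resulting protocol $\name_{\mathsf{det}}$ is authenticated $t_s$-secure and sabotaged $t_i$-secure. Because every component involved (both underlying BA protocols as well as $\GCone$, $\GCtwo$, and $\Sync$) is deterministic, $\name_{\mathsf{det}}$ is deterministic.

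For the authenticated setting, I apply parts 3 and 4 of \cref{thrm:main_theorem} directly. Part 3 gives expected bit complexity $O(\CC_\A(\Piauth) + \lambda n^2)$; since $\CC_\A(\Piauth) = O(\lambda n^2)$ for the Momose-Ren instantiation, this yields $O(\lambda n^2)$. Part 4 gives that \name completes within $O(1)$ rounds after $\Piauth$ halts, and since the Lenzen-Sheng-style early-stopping variant of Momose-Ren halts by round $O(f)$, the round complexity of $\name_{\mathsf{det}}$ in this setting is $O(f)$.

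For the sabotaged setting, part 2 of \cref{thrm:main_theorem} gives expected bit complexity $O(\CC_\A(\Piauth) + \CC_\A(\Pisab) + \lambda n^2)$. The main subtlety here, which I expect to be the only non-routine step, is that $\Piauth$ is only an authenticated protocol, and in the sabotaged setting we have no a-priori bit-complexity bound on it: a computationally unbounded adversary that has broken the PKI could in principle cause honest parties to transmit an arbitrary number of bits if we simply ran the protocol verbatim. This is precisely addressed by the modification described in the ``Bit complexity in the sabotaged setting'' paragraph preceding the corollary: each honest party locally caps the number of bits it is willing to send while executing $\Piauth$ at the known worst-case bound $O(\lambda n^2)$ implied by the protocol description, and if this cap is reached it abandons $\Piauth$, sets $v_2 \gets \bot$, and proceeds to $\Sync$. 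This modification does not affect correctness in the authenticated setting (where the bound is never exceeded by honest parties by construction) and bounds $\CC_\A(\Piauth) = O(\lambda n^2)$ in the sabotaged setting as well. Combined with $\CC_\A(\Pisab) = O(n^2)$, the total bit complexity is $O(\lambda n^2)$. For rounds, the sabotaged execution runs $\GCone$, then $\Piauth$ for at most $T_\auth = O(n)$ rounds, then $\Sync$, then $\GCtwo$, then the $\Delta$ early-stopping window, and finally $\Pisab$ for $T_\Sab = O(n)$ rounds, giving total round complexity $O(n)$. This completes the verification of all claimed bounds.
\qed
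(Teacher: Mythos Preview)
Your proposal is correct and takes essentially the same approach as the paper: the paper's argument for this corollary consists precisely of instantiating \cref{thrm:main_theorem} with Momose--Ren (made early-stopping via~\cite{lenzen2022recursive}) for $\Piauth$ and Berman--Garay--Perry~\cite{berman1992bit} for $\Pisab$, together with the per-party bit-cap modification from the ``Bit complexity in the sabotaged setting'' paragraph. Your write-up simply spells out the complexity bookkeeping that the paper leaves implicit.
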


\paragraph{$\name_\mathsf{ran}$.} For a randomized protocol, we instantiate $\Piauth$ with the protocol of \cite{AbrahamDDN019}, setting $T_\auth = O(\lambda)$, and $\Pisab$ with the protocol of \cite{lenzen2022recursive} (run bit-by-bit in parallel), to obtain the following.

\begin{corollary}\label{corol:protocol_randomized}
    Let $t_s,t_i$ s.t. $t_s+2t_i<n,t_i\leq t_s<\frac{n}{2}$. There exists a randomized authenticated $t_s$-secure, sabotaged $t_i$-secure protocol $\name_{\mathsf{ran}}$ solving the BA problem with the following properties.
    \begin{itemize}
        \item In the authenticated setting, $\name_{\mathsf{ran}}$ has $O(\lambda n^2)$ expected bit complexity and $O(1)$ expected round complexity.
        \item In the sabotaged setting, $\name_{\mathsf{ran}}$ has $O(\lambda^2 n^2)$ bit complexity and ${O(\lambda + f)}$ round complexity, where $f \leq t_i$ is the number of actual corruptions.
    \end{itemize}
\end{corollary}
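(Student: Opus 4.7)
The plan is to apply \cref{thrm:main_theorem} by instantiating $\Piauth$ with the expected-constant-round, honest-majority BA protocol of Abraham et al.~\cite{AbrahamDDN019} and $\Pisab$ with the $t<n/3$-resilient, early-stopping recursive BA protocol of \cite{lenzen2022recursive} run bit-by-bit in parallel on the $O(\lambda)$-bit input. Since \cite{AbrahamDDN019} achieves expected $O(1)$ rounds, setting $T_\auth = \Theta(\lambda)$ guarantees termination in $\Piauth$ except with negligible probability via a standard concentration/Markov argument on the number of rounds. Both hypotheses of \cref{thrm:main_theorem} are thus met: $\Piauth$ is authenticated $t_s$-secure and $\Pisab$ is sabotaged $t_i$-secure, yielding authenticated $t_s$-security and sabotaged $t_i$-security of $\name_{\mathsf{ran}}$ for free.

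For the authenticated bounds, \cref{lemma:main_sig_validity} and \cref{lemma:main_sig_consistency} imply that every forever-honest party decides via C1 and never executes $\Pisab$, so the expected round complexity and bit complexity are dominated by $\Piauth$ plus the $O(1)$-round, $O(\lambda n^2)$-bit overhead from $\GCone$, $\GCtwo$, and $\Sync$ (\cref{thrm:main_theorem}, item 4). This gives $O(1)$ expected rounds and $O(\lambda n^2)$ expected bits.

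For the sabotaged bounds, $\Pisab$ contributes $O(f)$ rounds and $O(\lambda n^2)$ bits (since the $\lambda$ parallel instances of \cite{lenzen2022recursive} each use $O(n^2)$ bits and all early-stop in the same $O(f)$ rounds driven by actual corruptions). $\Piauth$ is capped at $T_\auth = O(\lambda)$ rounds. The rest follows from the per-party a~priori bit cap highlighted in the discussion preceding the corollaries: each honest party enforces a bound derived from the specification of \cite{AbrahamDDN019} on bits it sends during $\Piauth$, and if this bound is exceeded it halts $\Piauth$ and feeds $\bot$ into $\Sync$. Choosing the bound so that the honest transcript of $\Piauth$ in the authenticated setting is preserved with overwhelming probability (e.g.\ $O(\lambda n)$ bits per party per round) yields a sabotaged-world cap of $O(\lambda^2 n^2)$ bits for $\Piauth$. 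Summing all contributions gives $O(\lambda^2 n^2)$ bits and $O(\lambda + f)$ rounds.

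The main subtlety — and the step to verify carefully — is that introducing the per-party bit cap on $\Piauth$ does not invalidate the hypotheses of \cref{thrm:main_theorem}. In the authenticated setting the cap is, with overwhelming probability, not triggered on honest parties (so $\Piauth$ behaves exactly as its black-box specification and \cref{def:blackbox_Access} continues to apply); in the sabotaged setting the theorem makes no use of $\Piauth$'s output beyond it being some value fed to $\Sync$, and truncating it to $\bot$ is consistent with this. Once this point is established, plugging the two concrete protocols into \cref{thrm:main_theorem} and bookkeeping the round and bit counts above yields precisely the stated bounds. \qed
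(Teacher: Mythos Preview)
Your proposal is correct and follows exactly the approach the paper takes: instantiate $\Piauth$ with \cite{AbrahamDDN019} (with $T_\auth=\Theta(\lambda)$) and $\Pisab$ with \cite{lenzen2022recursive} run bit-by-bit in parallel, then apply \cref{thrm:main_theorem} together with the per-party bit-cap modification discussed just before the corollaries. The paper's own justification is in fact the single sentence preceding the corollary; your write-up supplies the bookkeeping (why $T_\auth=\Theta(\lambda)$ suffices, why the cap does not affect the authenticated execution, and how the $O(\lambda^2 n^2)$ and $O(\lambda+f)$ figures arise) that the paper leaves implicit.
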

\section{Building a Synchronizer}\label{ssec:synchronizer}

In this section we construct $\Sync$, that fills the role of the synchronizer as discussed in \cref{sec:jaggernaut}. Formally, we prove the following.

\begin{theorem}\label{lemma:Sync}
    There exists a deterministic protocol $\Sync$ that satisfies the following in the context of the Synchronizer primitive (See \cref{def:synchro}), for any $t_s+2t_i<n, t_i\leq t_s<\frac{n}{2}$.
    \begin{itemize}
        \item Sabotaged $t_i$-totality.
        \item Authenticated $t_s$-totality, $t_s$-justification, and $t_s$-liveness.
    \end{itemize}
    Furthermore, $\Sync$ has communication (in bits) complexity of $O(\lambda n^2)$.
\end{theorem}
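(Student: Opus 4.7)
The plan is to build $\Sync$ as a short gossip protocol on top of the PKI-supplied threshold signature scheme. Concretely: upon $\mathsf{start\_synchronization}(v)$, a party multicasts a signed $(\mathsf{start}, v)$; upon collecting $n - t_s$ valid signatures on the same value $v$, it combines them into an $O(\lambda)$-bit threshold certificate $\mathcal{C}(v)$, multicasts it, and outputs $\mathsf{synchronization\_completed}(v)$; upon first receiving any valid $\mathcal{C}(v)$, it re-multicasts it (if not already done) and outputs $\mathsf{synchronization\_completed}(v)$. The protocol therefore consists of just one ``start'' gossip round followed by a single certificate gossip round.

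Authenticated $t_s$-justification will follow from signature unforgeability together with the counting bound $n - 2t_s \geq 1$ (immediate from $t_s < n/2$): any valid $\mathcal{C}(v)$ embeds $n - t_s$ genuine signatures on $(\mathsf{start}, v)$, of which at least $n - 2t_s$ are from honest parties, and such a party only signs after invoking $\mathsf{start\_synchronization}(v)$; since the signature needs at least one round to reach whoever assembles or verifies the certificate, the honest start happens strictly before the completion. Authenticated $t_s$-liveness will follow by observing that if all honest parties start with the same $v$ by round $\rho$, then by round $\rho + 1$ every party has received at least $n - t_s$ signatures on $(\mathsf{start}, v)$ (since there are at least $n - t_s$ honest parties), while no certificate for any $v' \neq v$ can arise, as assembling one requires $n - t_s$ signatures but only up to $t_s < n - t_s$ corrupt parties are available to sign $v'$.

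Totality, in both the authenticated and the sabotaged settings, will be argued from the forwarding rule. Let $\rho$ be the first round in which an honest party $p$ completes. At that moment $p$ holds some certificate and, by construction, multicasts it in round $\rho$ (either as part of its own aggregation step or as the mandated forward). By the one-round delivery guarantee, every honest party has received that certificate and therefore completes by round $\rho + 1$ (possibly on a different value, if it had already received a different valid certificate earlier, but still within the required one-round window). The crucial point is that this argument never uses the authenticity of the certificate, so it transfers wholesale to the sabotaged case, where the unbounded adversary can forge certificates at will. Complexity is $O(\lambda n^2)$: each honest party multicasts at most one $O(\lambda)$-bit signed start and at most one $O(\lambda)$-bit threshold certificate.

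The main obstacle I foresee is the interaction between sabotaged totality and forged certificates: one must ensure that \emph{every} path through which an honest party completes (its own aggregation, or receiving a certificate, whether genuine or forged) is unconditionally coupled with a re-broadcast, otherwise a clever unbounded adversary could deliver a forgery to a single honest party and keep it isolated from the rest. Once the forwarding step is made unconditional, the remainder of the argument reduces to routine round-boundary accounting under the synchrony assumption.
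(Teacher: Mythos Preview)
Your proposal is correct and essentially identical to the paper's construction and proof: the paper's $\Sync$ has each party multicast a signed $\mathsf{finish}(v)$ on start, aggregate $n-t_s$ of these into a threshold certificate $\mathcal{C}(v)$, multicast it and complete, and forward-and-complete upon receiving any certificate; justification is argued from $n-t_s>t_s$, liveness from the $n-t_s$ honest starts arriving by round $\rho+1$, and totality (in both worlds) from the unconditional certificate forward. Your write-up is in fact slightly more careful than the paper's in two places---you explicitly rule out a competing $\mathcal{C}(v')$ in the liveness argument, and you flag that the forward must fire on every completion path for sabotaged totality---but the underlying protocol and reasoning are the same.
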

The implementation of $\Sync$ can be found in \cref{fig:synrhonizer}.
On calling $\mathsf{start\_synchronization}$, parties multicast $\mathsf{finish}(v_i)$.
On receiving $n - t_s$ $\mathsf{finish}(v)$ for some $v$, parties form a certificate.
Then, at this point or on receipt of such a certificate, the caller multicasts it and calls $\mathsf{synchronization\_completed}(v)$.




\begin{figure}[ht!]
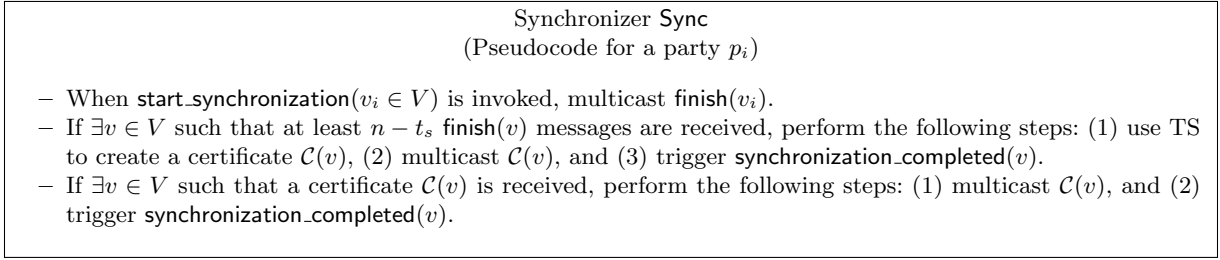
 
\centering
\fbox{
\parbox{0.95\textwidth}{
\centering Synchronizer $\Sync$\\ (Pseudocode for a party $p_i$) 
\begin{itemize}
\item When $\mathsf{start\_synchronization}(v_i \in V)$ is invoked, multicast $\mathsf{finish}(v_i)$.

\item If $\exists v \in V$ such that at least $n - t_s$ $\mathsf{finish}(v)$ messages are received, perform the following steps: (1) use TS to create a certificate $\mathcal{C}(v)$, (2) multicast $\mathcal{C}(v)$, and (3) trigger $\mathsf{synchronization\_completed}(v)$.

\item If $\exists v \in V$ such that a certificate $\mathcal{C}(v)$ is received, perform the following steps: (1) multicast $\mathcal{C}(v)$, and (2) trigger $\mathsf{synchronization\_completed}(v)$.
\end{itemize}
}
}
\caption{Synchronizer $\Sync$ with authenticated $t_s$-justification, totality and liveness, and sabotaged $t_i$-totality.}
\label{fig:synrhonizer}
\end{figure}
\begin{proof}[of \cref{lemma:Sync}]

\paragraph{$\Sync$'s correctness in the authenticated setting.}
First, prove the justification property.

\begin{lemma} [$t_s$-justification]
$\Sync$ (\Cref{fig:synrhonizer}) satisfies $t_s$-justification in the authenticated setting.
\end{lemma}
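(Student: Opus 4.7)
My plan is to trace the definition of completion backward to the unforgeability of threshold signatures, and then use the synchronous delivery assumption to pin down timing.

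Concretely, I would argue as follows. Suppose an honest party $p$ triggers $\mathsf{synchronization\_completed}(v')$ at round $r$. By inspection of $\Sync$, this happens in exactly one of two cases: (i) $p$ received $\mathsf{finish}(v')$ messages from at least $n-t_s$ distinct parties (and thus constructed $\mathcal{C}(v')$ itself), or (ii) $p$ received a valid certificate $\mathcal{C}(v')$ from the network. In case (i) a valid $n-t_s$-certificate on $\mathsf{finish}(v')$ clearly exists; in case (ii), since the adversary is computationally bounded in the authenticated setting, the unforgeability of the threshold signature scheme implies (up to negligible probability) that such a certificate could only have been produced from $n-t_s$ genuine signatures on $\mathsf{finish}(v')$ by distinct parties.

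Next I would extract an honest signer. Since the adversary controls at most $t_s$ parties, at least $n-2t_s \geq 1$ of those $n-t_s$ signers are honest (using $t_s<n/2$). An honest party only signs and sends $\mathsf{finish}(v')$ inside the clause triggered by its own invocation of $\mathsf{start\_synchronization}(v')$, so at least one honest party $q$ invoked $\mathsf{start\_synchronization}(v')$ before sending its $\mathsf{finish}(v')$ message.

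Finally I would argue the strict round inequality, which is the one mildly subtle point in the proof. Since $p$ completes at round $r$, the evidence on which $p$ acted (either the $n-t_s$ finish messages in case (i), or the certificate in case (ii), which was itself produced from such finish messages at some earlier honest or corrupt party) must have been transmitted over the network, and by the synchronous $\Delta$-delivery assumption those messages were sent no later than round $r-1$. Consequently, the honest signer $q$ sent $\mathsf{finish}(v')$ at some round $\le r-1$, which means $q$ invoked $\mathsf{start\_synchronization}(v')$ at some round $r' \le r-1 < r$, proving the claim. I expect the main (minor) obstacle to be writing the timing argument cleanly when the certificate is re-forwarded, since it introduces an intermediate relay party, but the bound $r' < r$ still holds because the original $\mathsf{finish}(v')$ messages had to precede the formation of the certificate, which in turn had to precede its delivery to $p$.
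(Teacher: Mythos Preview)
Your proposal is correct and follows essentially the same approach as the paper: both argue that any certificate $\mathcal{C}(v')$ requires $n-t_s$ signatures on $\mathsf{finish}(v')$, and since $n-t_s>t_s$ at least one signer is honest and therefore must have invoked $\mathsf{start\_synchronization}(v')$. Your write-up is in fact more careful than the paper's two-line proof, which does not separate cases (i)/(ii) and does not spell out the strict timing inequality $r'<r$ that you handle explicitly.
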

\begin{proof}
If an honest party completes synchronization with a value $v'$, there is at least one honest party that multicasts a $\mathsf{finish}(v')$ message (as $n - t_s > t_s$ given $t_s < n / 2$).
Hence, the lemma holds.
\qed
\end{proof}
Next, we prove the totality property.

\begin{lemma} [$t_s$-totality]
$\Sync$ (\Cref{fig:synrhonizer}) satisfies $t_s$-totality in the authenticated setting.    
\end{lemma}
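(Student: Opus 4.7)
The plan is to argue that completion of synchronization by any one honest party causes a certificate $\mathcal{C}(v)$ to be disseminated, which then forces every other honest party to complete within at most one additional round by synchrony.

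Concretely, I would let $\rho$ denote the first round at which some honest party $p$ triggers $\mathsf{synchronization\_completed}(v)$. By inspection of \Cref{fig:synrhonizer}, $p$'s completion is in one of two cases: either $p$ received $n - t_s$ $\mathsf{finish}(v)$ messages (and then constructed $\mathcal{C}(v)$ via the threshold-signature scheme), or $p$ received $\mathcal{C}(v)$ directly. In both cases, the protocol specifies that $p$ multicasts $\mathcal{C}(v)$ at round $\rho$ \emph{before} triggering $\mathsf{synchronization\_completed}(v)$. In the authenticated setting the threshold-signature scheme is assumed to be secure, so the certificate $\mathcal{C}(v)$ that $p$ transmits is a syntactically valid certificate that every honest party will accept.

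By the synchrony assumption, every honest party receives $p$'s certificate $\mathcal{C}(v)$ by round $\rho + 1$. I then observe that the second bullet of \Cref{fig:synrhonizer} is triggered unconditionally on receipt of a valid certificate (it does not require the party to have previously invoked $\mathsf{start\_synchronization}$), so any honest party $p_i$ that has not yet completed synchronization does so upon receiving $\mathcal{C}(v)$, at a round $\rho_i \le \rho + 1$. Honest parties that had already completed synchronization satisfy the bound trivially by minimality of $\rho$. This establishes $t_s$-totality.

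I do not expect any real obstacle here: the only thing to verify carefully is that the multicast of $\mathcal{C}(v)$ happens before the trigger (so a party completing via either branch really does propagate the certificate), and that the certificate-receipt branch applies even to parties that never invoked $\mathsf{start\_synchronization}$. Both are immediate from the pseudocode in \Cref{fig:synrhonizer}.
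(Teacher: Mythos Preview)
Your proposal is correct and follows exactly the same approach as the paper: the paper's proof is a one-line observation that each honest party disseminates a certificate upon completing synchronization, and you have simply unpacked that observation in detail (certificate is multicast before the trigger, synchrony delivers it within one round, receipt unconditionally triggers completion).
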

\begin{proof}
The lemma trivially holds as each honest party disseminates a certificate once it completes synchronization.
\qed
\end{proof}
Finally, we prove the liveness property.

\begin{lemma} [$t_s$-liveness]
$\Sync$ (\Cref{fig:synrhonizer}) satisfies $t_s$-liveness in the authenticated setting.
\end{lemma}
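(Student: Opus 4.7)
The plan is to separate the claim into its two components: (i) every honest party completes synchronization by round $\rho+1$, and (ii) the value with which it completes is necessarily $v$. Since all honest parties call $\mathsf{start\_synchronization}(v)$ by round $\rho$, each of the at least $n - t_s$ honest parties multicasts $\mathsf{finish}(v)$ by round $\rho$. By the synchrony assumption, every honest party receives all of these messages by round $\rho+1$, which is at least $n-t_s$ copies of $\mathsf{finish}(v)$. Thus, by round $\rho+1$ at the latest, each honest party that has not already completed satisfies the threshold in step 2 of the protocol, creates the certificate $\mathcal{C}(v)$, and triggers $\mathsf{synchronization\_completed}(v)$.

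The harder part, though still routine, is ruling out completion with some $v' \neq v$. The plan is to observe that an honest party can only complete synchronization via one of the two clauses of $\Sync$, and to refute each. For the first clause, no honest party ever multicasts $\mathsf{finish}(v')$ (since all honest parties start with $v$), so the number of $\mathsf{finish}(v')$ messages in the network is bounded by the at most $t_s$ corrupt parties. Since $t_s < n/2$, this is strictly less than $n - t_s$, so the quorum in the first clause cannot be met for $v'$. For the second clause, creating a valid certificate $\mathcal{C}(v')$ requires aggregating $n - t_s$ signatures on $\mathsf{finish}(v')$; only corrupt parties could sign such a message, and they number at most $t_s < n - t_s$, so under the assumption (made explicit in the adversarial model for the authenticated setting) that the threshold signature scheme provides perfect unforgeability, no such certificate can exist. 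Hence the only value any honest party can complete with is $v$.

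The main obstacle, to the extent there is one, is simply being careful about early completions: an honest party might complete at some round $\rho_i < \rho + 1$, before every honest party has even started. The two clauses above handle this cleanly because the argument ruling out $v' \neq v$ does not depend on when completion occurs, only on the global cap on how many parties can have produced $\mathsf{finish}(v')$ or signed it. Combining (i) and (ii) yields the conclusion that every honest party completes with value $v$ by round $\rho+1$, which is exactly $t_s$-liveness. No auxiliary lemmas are needed beyond what has already been established for justification and totality.
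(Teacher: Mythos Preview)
Your proof is correct and, for part (i), matches the paper's two-line argument essentially verbatim: by round $\rho+1$ every honest party has received $n-t_s$ $\mathsf{finish}(v)$ messages and so completes with $v$.

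Your part (ii), ruling out completion with some $v' \neq v$, is more careful than the paper, which simply omits this case. Note, however, that this argument is precisely the $t_s$-justification property already established: if an honest party completes with $v'$, some honest party started with $v'$, and since all honest parties start with $v$, we get $v' = v$. So rather than reproving the two-clause case analysis inline, you could have invoked justification in one line. Either route is fine; the paper's proof arguably leaves this implicit.
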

\begin{proof}
By the beginning of the round $\rho + 1$, each honest party receives $n - t_s$ $\mathsf{finish}(v)$ messages.
Therefore, each honest party completes synchronization with value $v$ by round $\rho + 1$.
\qed
\end{proof}

\paragraph{$\Sync$'s correctness in the sabotaged setting.}

We prove the totality property.

\begin{lemma} [$t_i$-totality]
$\Sync$ (\Cref{fig:synrhonizer}) satisfies $t_i$-totality in the sabotaged setting.
\end{lemma}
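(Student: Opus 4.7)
The plan is to mirror the (very short) argument used for $t_s$-totality in the authenticated setting, since the structural property of $\Sync$ that makes totality work is purely mechanical and does not rely on the unforgeability of threshold signatures. In particular, totality is a one-sided claim: it only says that once \emph{some} honest party completes, every other honest party completes within one extra round; it does not assert anything about which value they complete on, so the adversary's ability to forge certificates in the sabotaged setting does not undermine totality.

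More concretely, let $\rho$ denote the first round at which some honest party $p$ triggers $\mathsf{synchronization\_completed}(\cdot)$. I would split into the two possible reasons why $p$ completes at round $\rho$: (i) $p$ assembled $n-t_s$ $\mathsf{finish}(v)$ messages and created a certificate $\mathcal{C}(v)$, or (ii) $p$ received a certificate $\mathcal{C}(v)$ from the network. In either case the protocol prescribes that $p$ multicasts $\mathcal{C}(v)$ at round $\rho$ immediately before triggering completion. By synchrony, every honest party receives this multicast by round $\rho+1$. The certificate-handling clause of $\Sync$ is always active (it is not gated by whether the receiving party has invoked $\mathsf{start\_synchronization}$), so every honest party that has not already completed triggers $\mathsf{synchronization\_completed}(\cdot)$ by round $\rho+1$. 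Parties that already completed at some round $\rho_i\le \rho$ trivially satisfy $\rho_i\le \rho+1$. This gives the required bound $\rho_i\le \rho+1$ for every honest party.

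I do not expect a genuine obstacle here: the only thing one might worry about is whether the unbounded adversary could produce a ``phantom'' completion that does not come with a retransmittable certificate, but the code structure forces every honest completion to be preceded by multicasting a concrete bitstring $\mathcal{C}(v)$ (whatever its cryptographic validity), and it is exactly this bitstring that, when delivered, makes other honest parties complete. Hence the proof is essentially the one-line observation ``each honest party disseminates a certificate once it completes synchronization,'' and no computational assumption on signatures is invoked. I would therefore write the proof as a short paragraph appealing to the same mechanism used for $t_s$-totality, and then close with \qed.
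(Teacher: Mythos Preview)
Your proposal is correct and matches the paper's approach exactly: the paper's proof is the one-liner ``each honest party disseminates a certificate once it completes synchronization (as in the authenticated case),'' which is precisely the mechanism you spell out. Your additional remarks about why unforgeability is irrelevant here are sound but go beyond what the paper bothers to say.
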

\begin{proof}
The lemma trivially holds as each honest party disseminates a certificate once it completes synchronization (as in the authenticated case).
\qed
\end{proof}

\paragraph{Complexity.}
Finally, we argue that honest parties exchange $O(\lambda n^2)$ bits in $\Sync$.
Observe that each honest party multicasts a single $\mathsf{finish}$ message, thus sending $O(n)$ bits (assuming that the values are of constant size).
Moreover, each honest party multicasts one certificate of size $O(\lambda)$ bits, thus sending $O(\lambda n)$ bits.
Hence, honest parties send $n \cdot O(n + \lambda n) = O(\lambda n^2)$ bits.
\qed
\end{proof}

\section{Building Graded Consensus}\label{sec:Building Blocks}
In this section, we construct our two $\GC$ protocols that we use in \name\ -- First, our authenticated protocol with sabotaged validity and termination, and then, our sabotaged protocol with authenticated validity and termination.
\subsection{Authenticated Graded Consensus with Sabotaged Validity}\label{ssec:GC_PKI_star}
We first construct graded consensus for the authenticated setting, augmented with validity for the sabotaged setting.
We denote the protocol by $\GCone$. Formally, we prove the following.

\begin{theorem}\label{lemma:GC_PKI_star}
    Let $t_s,t_i$ such that $t_s+2t_i<n,t_i\leq t_s<\frac{n}{2}$. There exists a $T_1=O(1)$ round deterministic protocol ${\GCone}$ that satisfies the following properties in the context of the $\GC$ primitive (see \cref{def:GC}) in a $\Delta$-synchronous network when all parties commence the protocol in the same round.
    \begin{itemize}
        \item Authenticated $t_s$-secure.
        \item Sabotaged $t_i$-valid.
        \item Sabotaged $t_i$-terminating.
    \end{itemize}
Furthermore, given $\GCPKI$ has at most $O(\lambda n^2)$ bit complexity and constant round complexity, so too does $\GCone$.
\end{theorem}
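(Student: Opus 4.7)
The plan is to construct $\GCone$ by prepending a constant-round pre-phase to a black-box authenticated graded consensus protocol $\GCPKI$ satisfying the $t_s$-security and complexity hypotheses. In round~$1$, every party multicasts a signed vote for its input $v_i$. In round~$2$, a party that receives at least $n - t_i$ signed votes for its own input $v_i$ forms a threshold certificate $\mathcal{C}(v_i)$ and multicasts it; we call such a party a \emph{creator} of $v_i$. In round~$3$, a party that receives any valid certificate $\mathcal{C}(v)$ adopts $v$ as its new input to $\GCPKI$. All parties then invoke $\GCPKI$ on their (possibly updated) input and obtain an output $(v_o, g_o)$. The final output of $\GCone$ is $(v_i, 1)$ if the party was itself a creator, and $(v_o, g_o)$ otherwise. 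The pre-phase contributes $O(1)$ rounds and $O(\lambda n^2)$ bits (one round of signed votes and one round of $O(\lambda)$-bit certificates, each multicast by each party), and $\GCPKI$ contributes $O(1)$ rounds and $O(\lambda n^2)$ bits by hypothesis, so $T_1 = O(1)$ and the overall bit complexity is $O(\lambda n^2)$.

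Sabotaged $t_i$-validity and termination are both essentially immediate. If every honest party proposes $v$, at least $n - t_i$ honest parties sign $v$, so every honest party receives $n - t_i$ verifiable votes for its own input $v$ and therefore becomes a creator, outputting $(v, 1)$ through the direct branch regardless of what $\GCPKI$ or the unbounded adversary may do. Termination holds because $\GCone$ runs for a fixed number of rounds; a party that has not produced any output by the end of $\GCPKI$'s bounded running time simply outputs $(\bot, 0)$.

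The crux of the proof is authenticated $t_s$-security, specifically consistency, because the direct output of a creator must never contradict $\GCPKI$'s output for a non-creator. The key structural fact is that, under the PKI assumption and with at most $t_s$ corruptions, at most one value $v^*$ can have a valid certificate during the entire execution. This follows from a standard quorum-intersection argument: certificates for distinct $v, v'$ together witness $2(n - t_i)$ signed votes among $n$ parties, of which only the at most $t_s$ corrupt equivocators can double-sign, contradicting $2(n - t_i) - t_s > n$, which is equivalent to the resilience assumption $2t_i + t_s < n$. Given this, consider any honest creator $p$ outputting $(v, 1)$ directly: $p$ multicast $\mathcal{C}(v)$ in round~$2$, so every honest party receives it by round~$3$ and, by uniqueness of $v^*$, must either be a creator for $v$ itself or adopt $v$. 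Hence every honest party inputs $v$ to $\GCPKI$, and by validity of $\GCPKI$ outputs $(v, 1)$ there as well, matching the direct output. When no honest party is a creator, no one uses the direct branch and the output of $\GCone$ coincides with that of $\GCPKI$, so authenticated consistency is inherited. Authenticated validity is proven via a similar case split: if all honest parties input $v$, then no certificate for any $v' \neq v$ can exist (no honest would create such a cert, and corrupt parties alone have too few signatures since $t_s < n - t_i$), so no honest adopts a value different from $v$, and either the direct output or $\GCPKI$'s validity yields $(v, 1)$. Authenticated termination is inherited from $\GCPKI$. The main obstacle is carefully combining certificate uniqueness with the reach of pre-phase multicasts to conclude that, in the authenticated setting, a single creator's direct output forces every honest party's $\GCPKI$ input to equal its value.
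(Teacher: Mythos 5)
Your proposal is correct and follows essentially the same route as the paper's construction of $\GCone$: a constant-round pre-phase of signed $\mathsf{init}$ votes, $(n-t_i)$-certificates via threshold signatures that lock the output to $(v,1)$, certificate-based input adoption, and then a black-box run of $\GCPKI$, with the same key lemmas (quorum intersection rules out conflicting certificates under $2t_i+t_s<n$, any certificate contains an honest signature, and a single locker forces all honest parties to feed $v$ into $\GCPKI$ so its validity closes the consistency argument). The only deviations --- dropping the explicit round-3 uniqueness filter and tying certificate creation to a party's own input --- are immaterial to the claimed properties, since uniqueness is anyway implied in the authenticated setting and only validity/termination are needed in the sabotaged one.
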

Our protocol $\GCone$, described in \cref{fig:gc_1_jovan}, assumes an authenticated graded consensus protocol (see \cref{def:GC}), like the constant-round, authenticated $t_s$-secure, and sabotaged $t_i$-terminating GC  of Momose-Ren~\cite{Momose2021}.
In $\GCone$, parties run three rounds of filtering to additionally ensure sabotaged $t_i$-validity before executing $\GCPKI$.
First, each party multicasts its input (recall all values are signed by assumption).
Then, if $n - t_i$ signatures from different parties on some $v$ are received, a certificate on $v$ is formed, the output of the protocol is locked to $(v, 1)$, $p_i$'s input to $\GCPKI$is overwritten with $v$, and the certificate is multicast.
In the third round, parties overwrite their input if they receive non-conflicting certificates for a single value $v$.
Finally, parties run $\GCPKI$, and output the result of that if they did not lock their output.

\begin{figure}
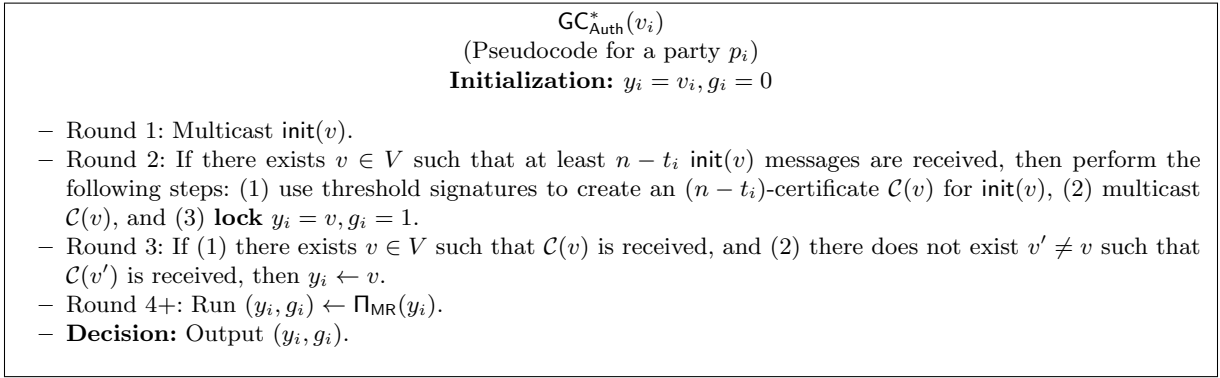

\centering
\fbox{
\parbox{0.95\textwidth}{
\centering ${\GCone}(v_i)$\\
(Pseudocode for a party $p_i$)\\
\textbf{Initialization: $y_i=v_i,g_i=0$}
\begin{itemize}

\item Round 1: Multicast $\mathsf{init}(v)$.

\item Round 2: If there exists $v \in V$ such that at least $n - t_i$ $\mathsf{init}(v)$ messages are received, then perform the following steps: (1) use threshold signatures to create an $(n - t_i)$-certificate $\mathcal{C}(v)$ for $\mathsf{init}(v)$, (2) multicast $\mathcal{C}(v)$, and (3) \textbf{lock} $y_i=v,g_i=1$.

\item Round 3: If (1) there exists $v \in V$ such that $\mathcal{C}(v)$ is received, and (2) there does not exist $v' \neq v$ such that $\mathcal{C}(v')$ is received, then $y_i \gets v$.

\item Round 4+: Run $(y_i,g_i) \gets \GCPKI(y_i)$.

\item \textbf{Decision:} Output $(y_i,g_i)$.

\end{itemize}

}
}
\caption{Authenticated $t_s$-secure and sabotaged $t_i$-valid graded consensus protocol $\GCone$ given an authenticated $t_s$-scure graded consensus $\GCPKI$.}

\label{fig:gc_1_jovan}
\end{figure}

\begin{proof}[of \cref{lemma:GC_PKI_star}]
     Termination clearly holds in both settings by the behaviour of the protocol. We now move on to the other properties.

\paragraph{Validity in the sabotaged setting.}
First, we prove that ${\GCone}$ satisfies validity in the sabotaged setting.

\begin{lemma} \label{lemma:init_correct_proposal}
If any honest party receives $n - t_i$ $\mathsf{init}(v)$ messages, then $v$ is the proposal of an honest party.
\end{lemma}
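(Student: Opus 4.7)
The plan is to prove this by a straightforward counting argument on the senders of the $\mathsf{init}(v)$ messages, exploiting the fact that the underlying point-to-point channels are authenticated and the corruption budget in the sabotaged setting is $t_i$.

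First, I would recall that by the communication model laid out in \cref{sec:prelims}, every message is delivered over an authenticated channel, so the receiver of an $\mathsf{init}(v)$ message is able to attribute it unambiguously to its actual sender (even though, since we are in the sabotaged setting, signatures themselves provide no security guarantee). Consequently, when the protocol says that $n - t_i$ $\mathsf{init}(v)$ messages are received, this must be interpreted as $n - t_i$ messages coming from $n - t_i$ \emph{distinct} parties.

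Next, I would invoke the sabotaged $t_i$-bound on the adversary: at most $t_i$ of the $n$ parties are corrupt throughout the execution, and hence at most $t_i$ of the $n - t_i$ distinguished senders above can be corrupt. Therefore at least $(n - t_i) - t_i = n - 2t_i$ of those senders are honest. Using the parameter regime $2t_i + t_s < n$ together with $t_s \geq 0$, I get $n - 2t_i > t_s \geq 0$, so at least one honest party $p_j$ is among the senders of $\mathsf{init}(v)$.

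Finally, I would appeal to the code of $\GCone$ in \cref{fig:gc_1_jovan}: in Round 1, the only $\mathsf{init}$ message an honest party ever multicasts is $\mathsf{init}(v_i)$, where $v_i$ is its own proposal. Since $p_j$ sent $\mathsf{init}(v)$, this forces $v$ to equal $p_j$'s proposal, which establishes the lemma. The argument is purely combinatorial, so there is no real obstacle: the only subtle point is making explicit that the authenticated channels (and not signatures) are what ensure the $n - t_i$ senders are distinct, which matters precisely because signature unforgeability cannot be relied upon in the sabotaged setting.
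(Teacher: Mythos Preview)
Your proposal is correct and follows essentially the same counting argument as the paper: both conclude that among $n - t_i$ senders at least one must be honest because the corruption bound is $t_i$ and $n - t_i > t_i$ (equivalently, $n - 2t_i > 0$) under $2t_i + t_s < n$. The paper's proof is a one-line version of yours; your additional remarks about authenticated channels guaranteeing sender distinctness and the explicit appeal to Round~1 of \Cref{fig:gc_1_jovan} just make the implicit steps explicit.
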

\begin{proof}
As $n > t_s + 2t_i$, we have that $n - t_i > t_i$.
Therefore, the lemma holds.
\qed
\end{proof}

\begin{lemma}[$t_i$-validity]
${\GCone}$ satisfies sabotaged $t_i$-validity.
\end{lemma}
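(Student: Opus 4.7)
The plan is to show that under sabotaged $t_i$-validity, all honest parties commit to $(v,1)$ at the locking step in Round 2, after which the subsequent execution of $\GCPKI$ cannot revoke that decision.

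First, I would observe that since all honest parties propose the same $v$, each honest party multicasts $\mathsf{init}(v)$ in Round 1. Because there are at least $n - t_i$ honest parties, each honest party receives at least $n - t_i$ $\mathsf{init}(v)$ messages by the end of Round 1. Hence in Round 2 each honest party is triggered to form an $(n-t_i)$-certificate $\mathcal{C}(v)$ and to \textbf{lock} $y_i = v$, $g_i = 1$. Moreover, since $n - t_i > t_i$ (as $n > t_s + 2 t_i \geq 3 t_i$), the adversary controls strictly fewer than $n - t_i$ parties, so no value $v' \neq v$ can accumulate $n - t_i$ $\mathsf{init}(v')$ messages at any honest party. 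Consequently, no honest party locks on a value other than $v$.

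Next, I would invoke the semantics of the \emph{lock}: once a party has locked its pair $(y_i, g_i) = (v, 1)$, the output of the subsequent $\GCPKI$ call does not overwrite $y_i$ or $g_i$ (this matches the written description, ``parties run $\GCPKI$, and output the result of that if they did not lock their output''). Thus every honest party outputs $(v, 1)$ in the decision step, which is exactly graded validity.

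The only mildly subtle point, and the one I would be careful to spell out, is the counting argument that excludes an honest party from simultaneously receiving $n - t_i$ $\mathsf{init}(v')$ messages for a second value $v'$, since this is what guarantees the lock cannot fire on the wrong value. Everything else is immediate from the protocol's structure and the fact that unbounded adversaries do not endanger purely combinatorial quorum reasoning that does not rely on signatures for correctness in the \emph{sabotaged} case.
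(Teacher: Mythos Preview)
Your argument is correct and follows essentially the same approach as the paper: both show that every honest party receives $n-t_i$ $\mathsf{init}(v)$ messages (and, by the quorum bound $n-t_i>t_i$, only for $v$), hence locks $(v,1)$ in Round~2 and outputs it regardless of what $\GCPKI$ subsequently returns. Your explicit treatment of the lock semantics and the counting argument excluding a competing value are exactly the ingredients the paper uses (the latter via its auxiliary \cref{lemma:init_correct_proposal}); there is no substantive difference.
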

\begin{proof}
Suppose all honest parties propose the same value $v \in V$.
As there are at least $n - t_i$ honest parties and they all propose value $v$, every honest party eventually receives $n - t_i$ values for $v$ (and, by \cref{lemma:init_correct_proposal}, only $v$).
Therefore, every honest party decides $(v, 1)$, thus concluding the proof.
\qed
\end{proof}

\paragraph{Validity in the authenticated setting.}
We now prove that ${\GCone}$ satisfies validity in the authenticated setting.
Throughout the rest of the proof of the validity property, we suppose that all honest parties propose the same value $v$.

\begin{lemma} \label{lemma:validity_certificate}
If any $(n - t_i)$-certificate $\mathcal{C}(v')$ is formed, then $v' = v$.
\end{lemma}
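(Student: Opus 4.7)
The plan is to argue by contradiction, exploiting the unforgeability of signatures in the authenticated setting together with the resilience bound $2t_i + t_s < n$. Suppose for contradiction that some $(n-t_i)$-certificate $\mathcal{C}(v')$ is formed with $v' \neq v$. By definition of an $(n-t_i)$-certificate, this object embeds $n - t_i$ valid signatures on the message $\mathsf{init}(v')$ from $n - t_i$ \emph{distinct} parties; in the authenticated setting, the underlying digital signature scheme is unforgeable, so each such signature must indeed have been produced by the corresponding party during the protocol execution.

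Now observe that in Round~1 the only message an honest party $p_j$ signs is $\mathsf{init}(v_j)$, where $v_j$ is its own input. Since by hypothesis every honest party's input equals $v \neq v'$, no honest party ever signs $\mathsf{init}(v')$, and hence none of the $n - t_i$ signatures composing $\mathcal{C}(v')$ can originate from an honest party. Consequently all $n - t_i$ signers must be corrupt, so the number of corrupt parties is at least $n - t_i$.

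However, in the authenticated setting the adversary is $t_s$-bounded, i.e., $|\F_r| \leq t_s$ throughout the execution. The assumption $2t_i + t_s < n$ gives $n - t_i > t_i + t_s \geq t_s$, so $n - t_i$ corrupt signers cannot exist, a contradiction. Hence $v' = v$, establishing the lemma. The argument is essentially a counting argument layered on top of signature unforgeability; there is no real technical obstacle, and the only step one should be careful about is ensuring that the ``distinct signers'' interpretation of a threshold certificate is used (which is precisely how $f$-certificates were defined in the preliminaries).
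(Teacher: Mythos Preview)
Your argument is correct and is essentially the contrapositive of the paper's proof: the paper observes directly that since $n - t_i > t_s$, at least one of the $n - t_i$ distinct signers of $\mathsf{init}(v')$ must be honest and hence $v' = v$, whereas you phrase the same counting step as a contradiction. The underlying inequality and the use of signature unforgeability are identical.
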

\begin{proof}
As $\mathcal{C}(v')$ is formed and $n - t_i > t_s$ (given that $n > t_s + 2t_i$), there exists an honest party that sends an $\mathsf{init}(v')$ message for its proposal $v'$.
Given that all honest parties propose $v$, $v' = v$.
\qed
\end{proof}

\begin{lemma} \label{lemma:validity_round_2}
If any honest party locks some value $v'$ in round 2, then $v' = v$.
\end{lemma}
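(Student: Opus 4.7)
The plan is to observe that this lemma is essentially a direct corollary of \cref{lemma:validity_certificate}. By the specification of $\GCone$ in \cref{fig:gc_1_jovan}, an honest party locks a value $v'$ in round 2 if and only if it receives at least $n - t_i$ $\mathsf{init}(v')$ messages, which by definition is precisely the triggering condition for forming the $(n - t_i)$-certificate $\mathcal{C}(v')$. Hence, the existence of an honest party that locks $v'$ in round 2 witnesses the formation of such a certificate.

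First, I would invoke \cref{lemma:validity_certificate}, which states that any formed $(n - t_i)$-certificate $\mathcal{C}(v')$ must satisfy $v' = v$ when all honest parties propose $v$. The only remaining step is the logical link: locking in round 2 implies certificate formation (the honest party itself constructs $\mathcal{C}(v')$ upon collecting the $n - t_i$ signed $\mathsf{init}(v')$ messages). Combining these two facts yields $v' = v$ immediately.

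There is no real obstacle here; the content of the lemma is a near-restatement of \cref{lemma:validity_certificate} tailored to round 2 behavior of honest parties. The entire proof fits in a single sentence appealing to the protocol description and the preceding lemma, so I expect the write-up to be a one-liner such as: \emph{``By inspection of $\GCone$, an honest party locks a value $v'$ in round 2 only after assembling an $(n - t_i)$-certificate $\mathcal{C}(v')$; by \cref{lemma:validity_certificate}, $v' = v$, as desired.''}
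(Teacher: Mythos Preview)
Your proposal is correct. The argument that locking in round~2 entails constructing an $(n-t_i)$-certificate $\mathcal{C}(v')$, together with \cref{lemma:validity_certificate}, immediately gives $v'=v$.

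The paper takes a marginally different route: rather than citing \cref{lemma:validity_certificate}, it re-derives the key fact inline by observing that receiving $n - t_i$ $\mathsf{init}(v')$ messages forces one of them to come from an honest party (since $n - t_i > t_s$), and all honest parties propose $v$. Your reduction to the preceding lemma is arguably cleaner and avoids repeating that counting argument; the paper's direct proof is self-contained but duplicates the reasoning already packaged in \cref{lemma:validity_certificate}. Substantively the two proofs are identical.
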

\begin{proof}
Let $p_i$ denote any honest party that locks some value $v'$ in round 2.
Therefore, $p_i$ receives an $\mathsf{init}(v')$ message for $v'$ from an honest party as $n - t_i > t_s$ (as $n > t_s + 2t_i$).
As all honest parties propose $v$, $v' = v$ and the lemma holds.
\qed
\end{proof}

\begin{lemma}[$t_s$-validity]
${\GCone}$ satisfies authenticated $t_s$-validity.
\end{lemma}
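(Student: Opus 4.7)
The plan is to reduce the claim to the authenticated $t_s$-validity of the underlying $\GCPKI$, using \cref{lemma:validity_certificate} and \cref{lemma:validity_round_2} as the two structural observations already established.

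First, I would start from the assumption that every honest party proposes the common value $v$. In round 1, every honest party multicasts $\mathsf{init}(v)$. Then I would invoke \cref{lemma:validity_certificate}: any $(n - t_i)$-certificate $\mathcal{C}(v')$ that is ever produced must satisfy $v' = v$. Since we are in the authenticated setting, threshold signatures are unforgeable, so no certificate on a value other than $v$ can appear in the view of any honest party. This handles round 3 cleanly: the "conflicting certificate" condition is automatically vacuous, so each honest party either sets $y_i \gets v$ upon receiving $\mathcal{C}(v)$ or keeps its initial $y_i = v$. Either way, every honest party enters Round 4 with $y_i = v$.

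Second, I would appeal to the authenticated $t_s$-security of $\GCPKI$. Since all honest parties invoke $\GCPKI$ with the same input $v$ under at most $t_s$ corruptions, authenticated $t_s$-validity of $\GCPKI$ yields $(y_i, g_i) = (v, 1)$ for every honest party. For the honest parties that locked in round 2, \cref{lemma:validity_round_2} already guarantees their locked pair is $(v, 1)$, which is consistent with the $\GCPKI$ output and hence matches the final decision no matter how one reads the "lock" semantics. Combining the two cases, every honest party outputs $(v, 1)$, which is exactly the statement of authenticated $t_s$-validity.

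The only mild subtlety — and the reason one cannot simply conclude validity from the round-2 lock alone, as in the sabotaged case — is that in the authenticated world we have only $n - t_s \le n - t_i$ honest senders of $\mathsf{init}(v)$, so an honest party is not guaranteed to assemble an $(n - t_i)$-certificate on its own. This is precisely why the proof routes through $\GCPKI$'s validity rather than through the locking step, and the previous two lemmas are used solely to ensure the input alignment required to invoke that validity.
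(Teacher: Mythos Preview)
Your proposal is correct and follows essentially the same approach as the paper: both arguments use \cref{lemma:validity_round_2} for parties that lock in round~2 and \cref{lemma:validity_certificate} to guarantee that every honest party enters $\GCPKI$ with input $v$, then invoke the authenticated $t_s$-validity of $\GCPKI$. Your version is slightly more careful in noting that the conclusion is independent of the precise ``lock'' semantics, and your closing remark explaining why one cannot simply rely on the round-2 lock (since only $n-t_s$ honest $\mathsf{init}$ messages are guaranteed) is a helpful addition the paper leaves implicit.
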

\begin{proof}
Consider any honest party $p_i$ that decides; let $p_i$ decide $(v', g')$.
We distinguish two cases:
\begin{compactitem}
    \item Let party $p_i$ lock in round $2$.
    In this case, $v' = v$ (due to \cref{lemma:validity_round_2}) and $g' = 1$.
    Therefore, the statement of the theorem holds in this case.

    \item Let party $p_i$ decide in round (i.e., step) $4$.
    Here, every honest party that proposes to $\GCPKI$ does so with value $v$.
    Indeed, if any honest party $p_j$ updates its $y_j$ variable in round $3$, \cref{lemma:validity_certificate} proves that $y_j$ holds value $v$.
    Hence, the validity property of $\GCPKI$ ensures that $v' = v$ and $g' = 1$.
\end{compactitem}
Having considered both cases, the proof is concluded.
\qed
\end{proof}

\paragraph{Consistency in the authenticated setting.}

\begin{lemma} \label{lemma:consistency_no_two_certificates}
No two $(n - t_i)$-certificates $\mathcal{C}(v)$ and $\mathcal{C}(v' \neq v)$ can be formed. 
\end{lemma}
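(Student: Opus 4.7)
The plan is a standard quorum-intersection argument, relying on the unforgeability of threshold signatures in the authenticated setting and on the fact that an honest party sends exactly one $\mathsf{init}$ message with its single proposal.

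First I would assume, for contradiction, that two $(n-t_i)$-certificates $\mathcal{C}(v)$ and $\mathcal{C}(v')$ with $v \neq v'$ are formed during the execution. By the security of the threshold signature scheme (and the assumption in the security proofs that the cryptographic primitives provide perfect security), each certificate must be backed by $n-t_i$ genuine $\mathsf{init}$ signatures from distinct parties. Let $S$ and $S'$ be these two signer sets.

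Next I would invoke quorum intersection on $S$ and $S'$: since $|S|,|S'| \geq n - t_i$, inclusion-exclusion gives $|S \cap S'| \geq 2(n-t_i) - n = n - 2t_i$. In the authenticated setting at most $t_s$ parties are corrupted, so the number of \emph{honest} parties in $S \cap S'$ is at least $n - 2t_i - t_s$, which is strictly positive by the hypothesis $t_s + 2t_i < n$. Hence there exists an honest party $p_j \in S \cap S'$.

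Finally I would derive the contradiction: by protocol specification, in Round $1$ each honest party multicasts $\mathsf{init}(v_j)$ for its single proposal $v_j$, and it signs no other $\mathsf{init}$ message. Thus $p_j$ cannot have contributed signatures to both $\mathcal{C}(v)$ and $\mathcal{C}(v')$ for $v \neq v'$, contradicting $p_j \in S \cap S'$. The main (and only) delicate point is verifying that the arithmetic $n - 2t_i - t_s \geq 1$ follows from the stated bound $n > t_s + 2t_i$, which it does; everything else is a direct appeal to unforgeability and honest behavior in Round $1$. \qed
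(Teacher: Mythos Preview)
Your proof is correct and follows essentially the same quorum-intersection argument as the paper: both compute $|S\cap S'|\ge n-2t_i>t_s$ and conclude that an honest party must have signed $\mathsf{init}$ messages for two distinct values, which is impossible. Your version is slightly more explicit about invoking unforgeability and the honest Round~1 behavior, but the approach is identical.
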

\begin{proof}
By contradiction, suppose two certificates $\mathcal{C}(v)$ and $\mathcal{C}(v' \neq v)$ are formed.
Therefore, there are $n - t_i + n - t_i - n = n - 2t_i$ parties that participated in forming both certificates.
As $n > t_s + 2t_i$, $n - 2t_i > t_s$, which further implies that there is at least one honest party that participated in forming both certificates.
As this is impossible, we reach a contradiction.
\qed
\end{proof}

\begin{lemma} \label{lemma:consistency_round_2}
If any honest party locks $(v, 1)$ in round $2$, then all honest parties decide $(v, 1)$.
\end{lemma}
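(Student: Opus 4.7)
The plan is to show that, under the hypothesis that some honest party $p_i$ locks $(v,1)$ in round $2$, every honest party enters $\GCPKI$ in round $4$ with input exactly $v$, at which point the authenticated $t_s$-validity of $\GCPKI$ together with the semantics of locking will finish the argument.

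First, I would note that by the protocol specification, when $p_i$ locks in round $2$ it constructs the $(n-t_i)$-certificate $\mathcal{C}(v)$ and multicasts it at the end of that round. By the synchrony of the network, every honest party receives $\mathcal{C}(v)$ by the start of round $3$. Next, I would invoke \Cref{lemma:consistency_no_two_certificates}: no certificate $\mathcal{C}(v')$ with $v' \neq v$ can ever be formed in this execution. Consequently, no honest party can receive a certificate for a value distinct from $v$ in round $3$. I would then case-split on the honest parties: any honest $p_j$ that also locks in round $2$ must lock value $v$, again by \Cref{lemma:consistency_no_two_certificates} (since $p_j$'s lock would entail forming a second, inconsistent certificate otherwise), so $p_j$'s output is already pinned to $(v,1)$. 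Any honest $p_j$ that does not lock receives $\mathcal{C}(v)$ in round $3$ and no conflicting certificate, triggering the round-$3$ update $y_j \gets v$.

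I would then conclude by running $\GCPKI$ in step $4$: every honest party has $y_j = v$ as its input to $\GCPKI$. The authenticated $t_s$-validity of $\GCPKI$ therefore yields that every honest party obtains $(v,1)$ as its $\GCPKI$ output, so non-locking parties decide $(v,1)$. Combined with the observation above that locking honest parties already decide $(v,1)$ by virtue of their lock, this proves the lemma.

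I do not expect a serious obstacle here; the only subtlety worth flagging is the coordination between locking and non-locking honest parties. The argument hinges entirely on \Cref{lemma:consistency_no_two_certificates} ruling out a second certificate, and on the fact that $p_i$'s multicast of $\mathcal{C}(v)$ propagates to all honest parties within one round so that the round-$3$ update rule fires uniformly. Once both are in hand, applying $\GCPKI$-validity is routine.
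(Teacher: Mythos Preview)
Your proposal is correct and follows essentially the same approach as the paper: use \Cref{lemma:consistency_no_two_certificates} to argue that any honest locker must lock $v$ and that every non-locking honest party updates $y_j\gets v$ in round~$3$, then invoke the authenticated $t_s$-validity of $\GCPKI$. The paper's proof is terser but structurally identical; your added remarks about the round-$2$ multicast of $\mathcal{C}(v)$ and synchronous delivery simply make explicit what the paper subsumes under ``by protocol construction.''
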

\begin{proof}
First, all honest parties that lock in round $2$ lock $(v, 1)$ (due to \cref{lemma:consistency_no_two_certificates}).
Moreover, every other honest party $p_i$ sets its $y_i$ variable to $v$ in round $3$ (due to \cref{lemma:consistency_no_two_certificates} and by protocol construction).
Therefore, the validity property of $\GCPKI$ ensures that all honest parties that decide in round $4$ decide $(v, 1)$.
\qed
\end{proof}

\begin{lemma}[$t_s$-consistency]
${\GCone}$ satisfies authenticated $t_s$-consistency.
\end{lemma}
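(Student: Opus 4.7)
The plan is to case split on whether any honest party performs the locking step in round $2$. Suppose some honest party $p_i$ outputs $(v, 1)$; the goal is to show that every honest party outputs either $(v, 0)$ or $(v, 1)$.

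First I would consider the case in which some honest party $p_j$ (possibly $p_i$ itself) locks a value $v'$ in round $2$. Here I invoke \cref{lemma:consistency_round_2} directly: every honest party decides $(v', 1)$. Since $p_i$ is among them and is assumed to output $(v, 1)$, this forces $v = v'$, so every honest party outputs $(v, 1)$ and consistency is immediate.

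Second, I would handle the complementary case in which no honest party locks in round $2$. Then by construction every honest party's output equals the value returned by its invocation of $\GCPKI$ in step $4$, since locking is the only mechanism by which a party can commit to an output prior to running $\GCPKI$. The output $(v, 1)$ at $p_i$ therefore comes from $\GCPKI$, and the assumed authenticated $t_s$-security of $\GCPKI$ includes $t_s$-consistency, so every other honest party's $\GCPKI$ output is either $(v, 0)$ or $(v, 1)$, which is exactly what consistency demands.

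Neither case poses a real obstacle: the heavy lifting has already been done by \cref{lemma:consistency_no_two_certificates} (ruling out divergent certificates) and \cref{lemma:consistency_round_2} (propagating a single honest lock into every honest party's output). The only mild bookkeeping point is remembering that a grade of $1$ can originate either from the round-$2$ lock or from $\GCPKI$, and routing each origin to the appropriate prior lemma; no new cryptographic or combinatorial ingredient is required.
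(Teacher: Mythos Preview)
Your proposal is correct and follows essentially the same approach as the paper: both case-split on whether some honest party locks in round~$2$, invoke \cref{lemma:consistency_round_2} in the affirmative case, and fall back on the $t_s$-consistency of $\GCPKI$ in the negative case. Your write-up is slightly more explicit in tying the cases back to a fixed honest output $(v,1)$, but the argument is the same.
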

\begin{proof}
To prove the lemma, we consider two possible scenarios:
\begin{itemize}
    \item There exist an honest party that locks $(v, 1)$ in round $2$.
    In this case, all honest parties decide $(v, 1)$ (by \cref{lemma:consistency_round_2}).

    \item No honest party decides in round $2$.
    In this case, the consistency property follows directly from the consistency property of $\GCPKI$.
\end{itemize}
As the statement of the lemma holds in both cases, the proof is concluded.
\qed
\end{proof}
\paragraph{Complexity.} The protocol $\GCPKI$ (using \cite{Momose2021}) has bit complexity of $O(\lambda n^2)$ for inputs of size $O(\lambda)$, and our addition to the protocol incurs an additive factor of $O(\lambda n^2)$ of bits communicated. Thus in total $\GCone$ has a bit complexity of $O(\lambda n^2)$. This concludes the proof of \cref{lemma:GC_PKI_star}.
\qed
\end{proof}

\subsection{Sabotaged Graded Consensus with Authenticated Validity}\label{ssec:GC_Sab_star}
We now move on to our second graded consensus protocol, the dual version of $\GCone$ for the sabotaged case. Specifically, we aim to design a protocol, $\GCtwo$, that functions as a standard GC in the sabotaged case, but maintains validity in the authenticated case. Formally, we prove the following.

\begin{theorem}\label{lemma:GCtwo}
    Let $t_s,t_i$ such that $2t_i+t_s<n, t_i\leq t_s<\frac{n}{2}$. There exists an $T_2=O(1)$ round deterministic protocol $\GCtwo$ that satisfies the following in the context of the graded consensus primitive (See \cref{def:GC}) in a $\Delta$-synchronous network if the round distance between protocol initiation times of any two honest parties is at most 1.
    \begin{itemize}
        \item Authenticated $t_s$-validity.
        \item Authenticated $t_s$-termination.
        \item Sabotaged $t_i$-security.
    \end{itemize}
Furthermore, given $\GCIT$ has at most $O(\lambda n^2)$ bit complexity and constant round complexity, so too does $\GCtwo$.
\end{theorem}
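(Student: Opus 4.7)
The plan is to mirror the structure used for $\GCone$ in \cref{lemma:GC_PKI_star}, but with the roles of the two settings swapped and with extra care taken for the fact that honest parties need only start within one round of one another. I would first fix a round schedule in which each of the ``echo,'' ``certify,'' and ``vote'' rounds is padded so that any message multicast by an honest party in round $r$ is delivered to every honest party before it processes round $r+1$, tolerating the one-round stagger. With that scheduling in place, I would split the proof into the five required properties.

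For authenticated $t_s$-validity and $t_s$-termination, I would show that if every honest party proposes the same value $v$, then at least $n-t_s$ honest parties echo $v$ (arriving at every honest party within one extra round), so every honest party itself assembles a certificate $\C(v)$. Because $t_s<n/2$, no $(n-t_s)$-quorum of echoes for a $v'\neq v$ can exist without an honest echo, so no honest party can both create $\C(v)$ and observe a conflicting certificate. Hence every honest party votes $v$, collects $n-t_s$ votes for $v$, and decides $(v,1)$ via the early path within a constant number of rounds, which also gives termination.

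The main obstacle, and the core of the proof, is sabotaged $t_i$-consistency, because a computationally unbounded adversary can forge certificates. Here I would lean on the protocol's requirement that an honest party decides early only if it \emph{created} a certificate itself. Suppose an honest party $p$ decides $(v,1)$ early: then $p$ collected $n-t_s$ echoes on $v$ and multicast $\C(v)$, so within one more round every honest party has seen $\C(v)$. I would then use quorum intersection on echo-sets and the rule that honest parties only act on a \emph{unique} observed certificate to argue that no other honest party can create a certificate for $v'\neq v$, and that every honest party that reaches $\GCIT$ feeds $v$ into it. Sabotaged $t_i$-validity of $\GCIT$ then forces all $\GCIT$ outputs to be $(v,\cdot)$, yielding consistency with $p$. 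If on the other hand no honest party decides early, the honest inputs into $\GCIT$ are the pre-$\GCIT$ values (possibly overwritten to a common forged value if one is accepted) and sabotaged $t_i$-consistency of $\GCIT$ closes the argument. Sabotaged $t_i$-validity is the symmetric and simpler case: all honest echoes are $v$, the adversary cannot push any honest party to adopt a different $\GCIT$ input (since any accepted certificate must have come together with sufficient echoes that do not exist for $v'\neq v$), and $\GCIT$'s validity concludes. Sabotaged $t_i$-termination is immediate since honest parties either decide early or fall through to $\GCIT$, which terminates.

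For the complexity claim, the pre-$\GCIT$ stages consist of a constant number of rounds in which each honest party multicasts one echo, at most one threshold-signature certificate of size $O(\lambda)$, and one vote, for an additive $O(\lambda n^2)$ bits and $O(1)$ rounds of overhead. Combined with the assumed $O(\lambda n^2)$ bit and $O(1)$ round complexity of $\GCIT$, this yields the stated bounds for $\GCtwo$. The hardest part will be the sabotaged consistency case above; in particular, carefully showing that the ``must create'' rule, combined with the one-round message-propagation guarantee and quorum-intersection under $t_s+2t_i<n$, rules out the scenario where two honest parties individually create certificates for distinct values and both commit early before any conflict is observable.
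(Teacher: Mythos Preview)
Your overall structure, the authenticated $t_s$-validity argument, and the complexity analysis match the paper. The gap is in your sabotaged $t_i$-consistency argument.

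You state that ``an honest party decides early only if it \emph{created} a certificate itself,'' and from an early-deciding $p$ you infer that ``$p$ collected $n-t_s$ echoes on $v$ and multicast $\C(v)$.'' That is not what the protocol stipulates: the ``must have created a certificate'' rule governs \emph{voting} in Round~3, not locking in Round~4. Locking $(v,1)$ requires only receiving $n-t_s$ $\vote(v)$ messages; $p$ itself need not have echoed, certified, or voted for $v$ at all.

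More seriously, your plan to ``use quorum intersection on echo-sets \ldots\ to argue that no other honest party can create a certificate for $v'\neq v$'' does not go through. Two $(n-t_s)$-echo quorums intersect in $n-2t_s$ parties, and the hypotheses $t_s+2t_i<n$, $t_i\le t_s<n/2$ do \emph{not} force $n-2t_s>t_i$ (take $n=9$, $t_s=4$, $t_i=1$). Hence in the sabotaged setting two honest parties \emph{can} each assemble $(n-t_s)$ echoes for distinct values; your target claim is simply false. The paper sidesteps this entirely: its key lemma (\Cref{lemma:one_vote}) is that no two honest parties can \emph{vote} for distinct values, because any honest voter for $v$ necessarily created and multicast $\C(v)$ in Round~2, so any other honest party that built $\C(v'\neq v)$ receives $\C(v)$ before executing Round~3 and is blocked from voting by the uniqueness check. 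From there, if some honest party locks $(v,1)$ it saw $n-t_s$ votes, hence at least $n-t_s-t_i>t_i$ honest votes, all for $v$ by the lemma; every honest party therefore receives $\ge n-t_s-t_i$ $\vote(v)$ messages and (again by the lemma) cannot receive that many for any $v'\neq v$, so it either locks $(v,1)$ or overwrites its $\GCIT$ input to $v$. The whole argument lives at the vote layer, via certificate propagation, not at the echo layer via quorum intersection.
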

Our protocol $\GCtwo$ (\cref{fig:GC_IT_validity_in_PKI_world}) assumes a sabotaged $t_i$-secure graded consensus protocol, like the constant-round round protocol $\GCIT$ of Attiya and Welch \cite{AttiyaW23} that is sabotaged $t_i$-secure, and authenticated $t_s$-terminating for any $t_i\leq t_s < \frac{n}{2}, t_s+2t_i<n$, even in \emph{asynchrony}.
As in $\GCone$, parties first \emph{echo} their signed value.
As our goal is authenticated $t_s$-validity, parties form a $n - t_s$ certificate on $\echo(v)$ if possible.
In the third round, we now require that if parties have both received certificates only for one value $v$, and additionally received $n - t_s$ \emph{echos} of $v$, that they \emph{vote} for $v$ by multicasting $\vote(v)$.
In the fourth round, parties overwrite their input to $\GCIT$ if they receive at least $n - t_s - t_i$ $\vote(v)$ messages for unique $v$, lock their output to $(v, 1)$ if they receive $n - t_s$ $\vote(v)$ messages, and then execute $\GCIT$, outputting the result if there is no locked value.

\begin{figure}[ht!] 
\centering
\fbox{
\parbox{0.95\textwidth}{
\centering ${\GCtwo}(v_i)$ \\
(Pseudocode for a party $p_i$)\\
\textbf{Initialization: $y_i=v_i,g_i=0$}
\begin{itemize}

\item Round 1 (\textbf{Echo}): Multicast $\echo(v_i)$.
\item Round 2 (\textbf{Forward}): If $\exists v\in V$ s.t. received at least $n-t_s$ $\echo(v)$ messages, use TS to create certificate $\C(v)$ and  multicast it.
\item Round 3 (\textbf{Vote}): If $p$ made a certificate $\C(v)$ at Round 2 for value $v\in V$, didn't receive a certificate $\C(v')$ for any other value $v'\in V$, and received $n-t_s$ $\echo(v)$, multicast $\vote(v)$.
\item Round 4+: If $\exists v\in V$ s.t. received at least $n-t_s$ $\vote(v)$ messages, set $v_i=v$ and \textbf{lock} $y_i=v, g_i=1$. Otherwise if received at least $n-t_s-t_i$ $\vote(v)$ for a unique value, set $v_i=v$.  Run $(y_i, g_i) \leftarrow\GCIT(v_i)$.

\item \textbf{Decision:} Output $(y_i, g_i)$.

\end{itemize}

}
}
\caption{Sabotaged $t_i$-secure and authenticated $t_s$-valid $\GCtwo$ given a sabotaged $t_s$-secure graded consensus $\GCIT$.}
\label{fig:GC_IT_validity_in_PKI_world}
\end{figure}

\begin{proof}[of \cref{lemma:GCtwo}]
    Termination clearly holds in both settings by the behaviour of the protocol. We now move on to the other properties.

Let an honest party $p_i$ start executing $\GCtwo$ in some global round $\rho_i$.
Then, party $p_i$ executes Round $x \in \{ 1, 2, 3, 4 \}$ of $\GCtwo$ in global rounds $\rho_i + 2(x - 1)$ and $\rho_i + 2(x - 1) + 1$.

\paragraph{Authenticated $t_s$-validity.}
We first prove that $\GCtwo$ satisfies authenticated $t_s$-validity.

\begin{lemma}[$t_s$-consistency]
The protocol $\GCtwo$ satisfies authenticated $t_s$-validity.
\end{lemma}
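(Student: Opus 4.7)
The plan is to show that when all honest parties propose the same value $v$ in the authenticated setting with at most $t_s$ corruptions, every honest party necessarily reaches the condition to lock $(v,1)$ in Round $4$.

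First, I would exploit the round-doubling structure noted in the preamble: each protocol round spans two global rounds, so even though honest parties may start within one global round of each other, any message sent by an honest party in its Round $x$ arrives at every other honest party before the end of that party's Round $x$. Concretely, since all honest parties propose $v$, they each multicast $\echo(v)$ at the start of their Round $1$, and by the start of any honest party's Round $2$ window, at least $n - t_s$ $\echo(v)$ messages have been received. Thus every honest party creates and multicasts its own certificate $\C(v)$ in Round $2$.

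Next, I would argue that no certificate $\C(v')$ for $v' \neq v$ can ever be formed. Forming such a certificate requires $n - t_s$ valid signatures on $\echo(v')$, but no honest party ever signs $\echo(v')$, so at most $t_s$ signatures on $\echo(v')$ exist, and $n - t_s > t_s$ since $t_s < n/2$. Hence, in Round $3$, every honest party (i) has itself formed $\C(v)$, (ii) has received no certificate for any other value, and (iii) has received $n - t_s$ $\echo(v)$ messages — so every honest party multicasts $\vote(v)$. By the same timing argument as above, every honest party receives at least $n - t_s$ $\vote(v)$ messages at the start of its Round $4$, triggering the lock on $(v, 1)$. Since outputs are returned after Round $4$ (and locked values override the $\GCIT$ output), every honest party outputs $(v, 1)$.

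I expect the main subtlety to be the timing argument: confirming that the $\leq 1$ global-round skew in start times combined with the doubled round length indeed guarantees that Round-$x$ messages from any honest party are delivered in time to be counted within every other honest party's Round-$x$ window. Once this is established, the rest of the argument reduces to standard threshold counting using $n - t_s > t_s$ (which also guarantees uniqueness of any $(n-t_s)$-certificate). No appeal to $\GCIT$'s validity is needed, since honest parties lock before invoking $\GCIT$.
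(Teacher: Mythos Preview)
Your proposal is correct and follows essentially the same approach as the paper's proof: both argue that all honest parties receive $n-t_s$ echoes of $v$ by Round~2, that no certificate $\mathcal{C}(v')$ for $v' \neq v$ can be formed (since $n-t_s > t_s$ forces an honest signature), that consequently every honest party multicasts $\vote(v)$ in Round~3, and hence every honest party locks $(v,1)$ in Round~4. Your treatment of the timing via the round-doubling structure is more explicit than the paper's, which handles the one-round skew with a single remark that honest parties overlap in each round for at least $\Delta$ time, but the underlying argument is identical.
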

\begin{proof}
Suppose all honest parties propose the same value $v$.
As all honest parties overlap in each round of $\GCtwo$ for (at least) $\delta$ time, all honest parties receive $n - t_s$ $\mathsf{echo}(v)$ messages at the start of Round 2.
Moreover, no certificate $\mathcal{C}(v' \neq v)$ can exist as that would imply that there exists a correct party whose proposal is $v' \neq v$.
Hence, every honest party multicast $\mathsf{vote}(v)$ in Round 3, which then implies that every honest party outputs $(v, 1)$.
Thus, the validity property is ensured in the signature world.
\qed
\end{proof}

\paragraph{Sabotaged $t_i$-validity.}
Then, we prove that $\GCtwo$ satisfies validity in the sabotaged setting.
We start by proving that if all honest parties propose the same value $v$ and an honest party creates a certificate $\mathcal{C}(v')$ in Round 2, then $v = v'$.

\begin{lemma} \label{lemma:certificate_round_2}
Suppose all honest parties propose the same value $v$.
If an honest party creates a certificate $\mathcal{C}(v')$ in Round 2, then $v = v'$.
\end{lemma}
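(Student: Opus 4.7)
The plan is to show that in the sabotaged setting, producing a certificate $\mathcal{C}(v')$ requires the participation of at least one honest party, whose echo must carry its own proposal $v$. Since we are in the sabotaged setting with at most $t_i$ corruptions, the counting argument driven by the parameter regime $2t_i + t_s < n$ will do the work.

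Concretely, suppose the honest party $p_i$ forms $\mathcal{C}(v')$ in its Round 2. By the protocol specification in \Cref{fig:GC_IT_validity_in_PKI_world}, this means $p_i$ received at least $n - t_s$ messages of the form $\mathsf{echo}(v')$ from distinct parties. In the sabotaged setting at most $t_i$ of those senders are corrupt, so at least $n - t_s - t_i$ of them are honest. The assumption $2t_i + t_s < n$ yields $n - t_s - t_i > t_i \geq 0$, so at least one honest party $p_j$ multicast $\mathsf{echo}(v')$ in its Round 1.

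The final step observes that an honest party only echoes its own proposal: by construction, $p_j$ multicast $\mathsf{echo}(v_j)$ where $v_j$ is its proposal. Under our assumption that every honest party's proposal equals $v$, we have $v_j = v$, hence $v' = v_j = v$, which is exactly the claim.

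The only subtlety — and the one place where I would pause — is that honest parties need not all be in the same round simultaneously (they are at most one round apart by the assumption on when $\GCtwo$ is invoked). However, this plays no role here: the argument relies only on who \emph{sent} an $\mathsf{echo}$ message, not on the timing of when $p_i$ collected them. Since every honest party's $\mathsf{echo}$ is an $\mathsf{echo}(v)$ regardless of the round in which it was issued, no timing-related complication arises, and the lemma follows.
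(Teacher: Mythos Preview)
Your proof is correct and follows essentially the same approach as the paper: receiving $n - t_s$ $\mathsf{echo}(v')$ messages forces at least one honest sender (since there are at most $t_i$ corrupt parties and $n - t_s > t_i$), and honest parties only echo their proposal $v$. Your inequality $n - t_s - t_i > t_i$ is a bit stronger than needed here (you only need $n - t_s - t_i > 0$, equivalently $n - t_s > t_i$, which is exactly what the paper invokes), but the argument is the same.
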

\begin{proof}
If an honest party $p_i$ creates a certificate $\mathcal{C}(v')$ in Round 2, party $p_i$ has received an $\mathsf{echo}(v')$ message from an honest party as $n - t_s > t_i$ (given that $n > t_s + 2t_i$).
Therefore, $v' = v$.
\qed
\end{proof}
Next, we prove that if all honest parties propose the same value $v$ and an honest party multicast a $\mathsf{vote}(v')$ message in Round 3, then $v' = v$.

\begin{lemma} \label{lemma:vote_message_round_3}
Suppose all honest parties propose the same value $v$.
If an honest party multicasts a $\mathsf{vote}(v')$ message in Round 3, then $v' = v$.
\end{lemma}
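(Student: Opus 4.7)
The plan is to invoke the protocol specification of Round 3 together with the previous lemma (\Cref{lemma:certificate_round_2}). By the definition of $\GCtwo$ in \Cref{fig:GC_IT_validity_in_PKI_world}, an honest party $p_i$ only multicasts $\mathsf{vote}(v')$ in Round 3 if three conditions are met; crucially, one of those conditions is that $p_i$ \emph{itself created} a certificate $\mathcal{C}(v')$ in Round 2. This is exactly the design choice highlighted in the technical overview, ensuring that voting depends on locally-formed certificates rather than on merely receiving one.

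Given this, the proof proceeds in one step: I invoke \Cref{lemma:certificate_round_2}, which asserts that under the assumption that all honest parties propose the same value $v$, any certificate $\mathcal{C}(v')$ produced by an honest party in Round 2 must satisfy $v' = v$. Since $p_i$ created $\mathcal{C}(v')$ in Round 2 in order to vote for $v'$ in Round 3, we conclude $v' = v$, as desired.

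I do not anticipate any obstacle here; the lemma is essentially a direct consequence of the previous one, made possible precisely because the Round 3 voting condition requires local certificate creation rather than certificate receipt. The only subtle point worth noting (though not required for the proof itself) is that the weaker condition of merely having received $\mathcal{C}(v')$ would not suffice, since corrupt parties could in principle disseminate certificates for values not proposed by honest parties if forwarded signatures were accepted blindly; the stronger ``created'' requirement rules this out and keeps the single-lemma proof clean.
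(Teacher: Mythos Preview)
Your proposal is correct and matches the paper's proof essentially verbatim: both observe that multicasting $\mathsf{vote}(v')$ in Round 3 requires $p_i$ to have created $\mathcal{C}(v')$ in Round 2, and then invoke \Cref{lemma:certificate_round_2} to conclude $v'=v$. Your parenthetical remark about why mere receipt of a certificate would not suffice is accurate in spirit (in the sabotaged setting certificates can be forged, so only the underlying $n-t_s$ echoes carry weight), but as you note it is not needed for the argument itself.
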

\begin{proof}
If an honest party $p_i$ multicasts a $\mathsf{vote}(v')$ message in Round 3, party $p_i$ has previously constructed a certificate $\mathcal{C}(v')$ in Round 2.
By \Cref{lemma:certificate_round_2}, $v' = v$, thus concluding the proof.
\qed
\end{proof}
Next, we prove that if all honest parties propose $v$ to $\GCtwo$, then all honest parties propose $v$ to $\GCIT$ in Round 4.

\begin{lemma} \label{lemma:all_propose_v_round_4}
Suppose all honest parties propose the same value $v$.
Then, all honest parties propose $v$ to $\GCIT$ in Round 4.
\end{lemma}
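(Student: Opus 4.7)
}
The plan is to show that the only $\mathsf{vote}(v')$ messages an honest party can receive in sufficient quantity to overwrite its input are those with $v' = v$. By \Cref{lemma:vote_message_round_3}, no honest party multicasts a $\mathsf{vote}(v')$ message with $v' \neq v$, so any such message received by an honest party must have originated from a corrupt party. In the sabotaged setting there are at most $t_i$ corrupt parties, and thus any honest party receives at most $t_i$ such $\mathsf{vote}(v')$ messages in total for any fixed $v' \neq v$.

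Next I would use the parameter regime $2t_i + t_s < n$ to rule out both overwriting conditions in Round 4 for any $v' \neq v$. First, since $t_s < n/2$ we have $n - t_s > t_i$, so no honest party can collect $n - t_s$ $\mathsf{vote}(v')$ messages for $v' \neq v$; hence no honest party locks on any value different from $v$. Second, $2t_i + t_s < n$ rewrites as $t_i < n - t_s - t_i$, so no honest party can collect $n - t_s - t_i$ $\mathsf{vote}(v')$ messages for any $v' \neq v$ either.

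Consequently, in Round 4 an honest party either (i) finds no value meeting any overwrite threshold, in which case $v_i$ remains equal to its original proposal $v$, or (ii) the only value that could possibly meet one of the thresholds is $v$ itself, in which case $v_i$ is (re)assigned to $v$. In every case, $v_i = v$ when the party invokes $\GCIT$, so all honest parties propose $v$ to $\GCIT$, completing the proof. I do not anticipate a substantial obstacle here: the statement is essentially a counting argument that reuses \Cref{lemma:vote_message_round_3} together with the resilience bound $2t_i + t_s < n$.
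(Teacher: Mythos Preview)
Your proposal is correct and follows essentially the same argument as the paper: both use \Cref{lemma:vote_message_round_3} to conclude that only corrupt parties can send $\mathsf{vote}(v')$ for $v' \neq v$, and then invoke the inequalities $n - t_s > t_i$ and $n - t_s - t_i > t_i$ (both consequences of $t_s + 2t_i < n$) to rule out either overwrite condition firing on a wrong value. The paper phrases it in the direct form (``any threshold reached implies an honest vote, hence $v' = v$'') while you phrase it contrapositively (``cannot collect enough wrong-value votes''), but the content is identical.
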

\begin{proof}
As $v$ is proposed by every honest party, every honest party $p_i$ has $v_i = v$.
We show that if party $p_i$ updates its $v_i$ local variable, then it updates it to value $v$.
 Consider all possible places when party $p_i$ could update its local variable $v_i$:
\begin{itemize}
    \item Round 4 upon receiving $n - t_s$ $\mathsf{vote}(v')$ messages:
    As $n - t_s > t_i$ (given $n > t_s + 2t_i$), party $p_i$ receives a $\mathsf{vote}(v')$ message from an honest party.
    Therefore, \Cref{lemma:vote_message_round_3} proves that $v' = v$.
    Thus, the statement holds in this case.

    \item Round 4 upon receiving $n - t_s - t_i$ for a unique value $v'$:
    As $n - t_s - t_i > t_i$ (given $n > t_s + 2t_i$), party $p_i$ receives a $\mathsf{vote}(v')$ message from an honest party.
    Hence, $v' = v$ by \Cref{lemma:vote_message_round_3}, which proves the statement even in this case.
\end{itemize}
As $v_i$ remains $v$ at party $p_i$, the proof is concluded.
\qed
\end{proof}
Finally, we prove that $\GCtwo$ satisfies validity in the sabotaged setting.

\begin{lemma}[$t_i$-validity]
The protocol $\GCtwo$ satisfies validity in the sabotaged setting, i.e. $\GCtwo$ satisfies sabotaged $t_i$-validity.
\end{lemma}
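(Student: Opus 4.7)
The plan is to assemble the pieces already proven in \Cref{lemma:certificate_round_2,lemma:vote_message_round_3,lemma:all_propose_v_round_4}, together with the sabotaged $t_i$-validity of the underlying graded consensus protocol $\GCIT$. Fix the scenario in which all honest parties propose the same value $v$ and the adversary corrupts at most $t_i$ parties in the sabotaged setting.

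First I would dispose of the ``lock'' branch in Round 4. Suppose an honest party $p_i$ locks $(v', 1)$ because it received $n - t_s$ $\mathsf{vote}(v')$ messages. Since $n > t_s + 2t_i$ implies $n - t_s > t_i$, at least one of these $\mathsf{vote}(v')$ messages originates from an honest party, and \Cref{lemma:vote_message_round_3} then forces $v' = v$. So any honest party that locks, locks precisely $(v, 1)$, which becomes its output.

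Second I would handle the non-locking honest parties. By \Cref{lemma:all_propose_v_round_4}, every honest party proposes $v$ to $\GCIT$ in Round 4. Since the adversary is $t_i$-bounded in the sabotaged setting, the sabotaged $t_i$-validity of $\GCIT$ (assumed as a property of the underlying primitive) ensures that every honest party that enters $\GCIT$ outputs $(v, 1)$ from it. Combining with the previous paragraph, every honest party outputs $(v, 1)$.

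There is essentially no obstacle beyond book-keeping; the only subtlety is making sure that the lock in Round 4 is interpreted consistently with the convention used for $\GCone$ (namely, that a locked output is not overwritten by the subsequent call to $\GCIT$). Once this is made explicit, the argument is a two-line assembly of the preceding lemmas. I would conclude with a \qed and close the lemma environment.
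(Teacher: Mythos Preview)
Your proposal is correct and follows essentially the same approach as the paper: a case split on whether an honest party locks in Round~4 (using $n-t_s>t_i$ together with \Cref{lemma:vote_message_round_3}) or falls through to $\GCIT$ (using \Cref{lemma:all_propose_v_round_4} and the sabotaged $t_i$-validity of $\GCIT$). The only cosmetic difference is that the paper does not explicitly flag the locking convention, but otherwise the arguments coincide.
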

\begin{proof}
Suppose all honest parties propose the same value $v$.
Consider any honest party $p_i$.
We distinguish two possible cases:
\begin{itemize}
    \item Let party $p_i$ decide a pair $(v', g')$ in Round 4.
    In this case, $p_i$ receives $n - t_s$ $\mathsf{vote}(v')$ messages.
    As $n - t_s > t_i$ (given $n > t_s + 2t_i$), $p_i$ receives a $\mathsf{vote}(v')$ message from an honest party.
    Therefore, \Cref{lemma:vote_message_round_3} proves that $v' = v$ and $g' = 1$.
    Thus, the validity property is satisfied in this case.

    \item Let party $p_i$ decide a value $v'$ after running the $\Pi_{\sfA\sfW}$ algorithm.
    By \Cref{lemma:all_propose_v_round_4}, all honest parties propose $v$ to $\GCIT$.
    Due to the validity property of $\GCIT$, every honest party decides $(v, 1)$ from $\GCIT$, thus concluding the proof even in this case.
\end{itemize}
As the validity property is satisfied in both cases, the proof is concluded.
\qed
\end{proof}

\paragraph{Consistency in the sabotaged setting.}
Next, we prove that $\GCtwo$ satisfies consistency in the sabotaged setting.
We first show that if one honest party sends a $\mathsf{vote}(v)$ message and another honest party sends a $\mathsf{vote}(v')$ message, then $v = v'$.

\begin{lemma} \label{lemma:one_vote}
If an honest party $p_i$ sends a $\mathsf{vote}(v)$ message and an honest party $p_j$ sends a $\mathsf{vote}(v')$ message, then $v = v'$.
\end{lemma}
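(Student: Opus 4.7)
Plan. I would argue by contradiction, supposing $v \neq v'$. The Round 3 vote rule of $\GCtwo$ forces any honest party that multicasts $\vote(w)$ to have constructed the certificate $\C(w)$ \emph{itself} in its own Round 2 (not merely received it). So $p_i$ must have built and multicast $\C(v)$ during its Round 2, and likewise $p_j$ must have built and multicast $\C(v')$.

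The core step is a timing argument. Each phase of $\GCtwo$ occupies two global rounds, and by the standing assumption $|\rho_i - \rho_j| \leq 1$. Hence $p_i$ transmits $\C(v)$ no later than global round $\rho_i + 3$, so $p_j$ receives $\C(v)$ by global round $\rho_i + 4 \leq \rho_j + 5$, i.e.\ within $p_j$'s Round 3 window $[\rho_j + 4,\ \rho_j + 5]$. Symmetrically, $\C(v')$ reaches $p_i$ within its Round 3 window. The doubled phase length in $\GCtwo$ is designed precisely to absorb the 1-round start skew and guarantee this delivery before either party commits to sending its Round 3 vote.

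Since $v \neq v'$, each of $p_i, p_j$ has therefore, during its Round 3, received a certificate for a value different from the one it constructed itself. This falsifies the clause ``didn't receive a certificate $\C(v')$ for any other value $v' \in V$'' in the Round 3 vote rule, so neither party could have multicast $\vote$ in Round 3. This contradicts the hypothesis, so $v = v'$.

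The main obstacle is pinning down the interpretation that the Round 3 vote condition is evaluated late enough in a party's Round 3 window to reflect every other honest party's Round 2 broadcasts; this is exactly what the two-global-rounds-per-phase layout buys, combined with the $|\rho_i - \rho_j| \leq 1$ skew bound inherited from $\Sync$. Note the argument never appeals to unforgeability of threshold signatures---it uses only the fact that honest senders deliver their own certificates to honest receivers in time---so the lemma applies uniformly in the authenticated and sabotaged settings.
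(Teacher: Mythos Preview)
Your proof is correct and follows essentially the same approach as the paper's: both argue by contradiction, observe that each voting party must have \emph{itself} built a certificate in Round~2, and conclude that each party therefore receives the other's conflicting certificate in time to block the Round~3 vote condition. The paper's proof compresses the delivery step into a single sentence (``Hence, party $p_i$ receives a certificate $\mathcal{C}(v')$ in Round~3 \ldots''), whereas you spell out the global-round arithmetic explicitly using the two-rounds-per-phase convention and the $|\rho_i-\rho_j|\le 1$ skew bound; your observation that the argument never appeals to signature unforgeability is also apt, since the lemma is invoked for sabotaged consistency.
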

\begin{proof}
By contradiction, let $v \neq v'$.
As $p_i$ (resp., $p_j$) sends a $\mathsf{vote}(v)$ (resp., $\mathsf{vote}(v')$) message, $p_i$ (resp., $p_j$) creates a certificate $\mathcal{C}(v)$ (resp., $\mathcal{C}(v')$) in Round 2.
Hence, party $p_i$ receives a certificate $\mathcal{C}(v')$ in Round 3 and party $p_j$ receives a certificate $\mathcal{C}(v)$ in Round 3.
Therefore, we reach contradiction with the fact that parties $p_i$ and $p_j$ send $\mathsf{vote}(\cdot)$ messages.
\qed
\end{proof}
Finally, we are ready to prove the consistency property of $\GCtwo$.

\begin{lemma}[$t_i$-consistency]
The protocol $\GCtwo$ satisfies consistency in the sabotaged setting.
\end{lemma}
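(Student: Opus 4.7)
The goal is to show that if any honest party $p^*$ decides $(v, 1)$, then every honest party decides $(v, 0)$ or $(v, 1)$. I would split the argument based on where $p^*$'s output comes from: either $p^*$ locks in Round 4 upon receiving $n - t_s$ $\vote(v)$ messages, or $p^*$ does not lock and instead outputs $(v, 1)$ from $\GCIT$. The first case is the substantive one and accounts for most of the argument.

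Before handling the cases, I would establish a ``lock uniqueness'' observation: if any honest party locks some value $v'$ with grade $1$ in Round 4, then $v' = v$. Because $n - t_s > t_i$ (which follows from $n > t_s + 2 t_i$), any lock forces the collection of $n - t_s$ votes to include at least one $\vote(v')$ message from an honest party. \Cref{lemma:one_vote} then guarantees that every honest vote is for the same value, so both $p^*$ and any other honest locker must lock the same $v$.

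In Case 1 ($p^*$ locks $(v, 1)$), I would argue that every honest party proposes $v$ to $\GCIT$ in Round 4. The $n - t_s$ $\vote(v)$ messages collected by $p^*$ contain at least $n - t_s - t_i$ sent by honest parties; these are multicast, and the round structure (each protocol round spans two global rounds, absorbing the at-most-one-round skew between honest start times) ensures every honest party receives all such messages in time. By the lock-uniqueness observation, no honest party sends $\vote(v'')$ for $v'' \neq v$, so the total number of $\vote(v'')$ messages for any $v'' \neq v$ is at most $t_i < n - t_s - t_i$. Hence no honest party can update its local $v_i$ to a value other than $v$, and every honest party inputs $v$ to $\GCIT$. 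The sabotaged $t_i$-validity of $\GCIT$ then yields that every honest party outputs $(v, 1)$ from $\GCIT$; combined with lock-uniqueness, every honest party outputs $v$.

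In Case 2 ($p^*$ outputs $(v, 1)$ via $\GCIT$ without locking), any honest party that locks must lock $v$ by lock-uniqueness, while any honest party that does not lock uses the output of $\GCIT$; by the sabotaged $t_i$-consistency of $\GCIT$, such parties output $(v, 0)$ or $(v, 1)$. The main obstacle I anticipate is the bookkeeping around the non-simultaneous start in Case 1 -- ensuring that the $n - t_s - t_i$ honest $\vote(v)$ messages really do reach every honest party during their own Round 4 window -- but this is absorbed by the two-global-round-per-protocol-round scheduling set up immediately before the proof, which I would invoke rather than re-derive.
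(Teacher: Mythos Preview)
Your approach uses the same ideas as the paper, and Case~1 is handled correctly. However, your case split (whether the fixed witness $p^*$ locks) differs from the paper's (whether \emph{any} honest party locks), and this introduces a small gap in your Case~2.

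Your lock-uniqueness observation is stated as ``any honest locker locks $v$'', but the justification you give (``both $p^*$ and any other honest locker must lock the same $v$'') presupposes that $p^*$ itself is a locker. In Case~2, $p^*$ does not lock, so when you write ``any honest party that locks must lock $v$ by lock-uniqueness'', the step is not yet justified: \Cref{lemma:one_vote} only tells you that all honest \emph{lockers} agree with one another, not that they agree with the value $p^*$ later extracts from $\GCIT$.

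The fix is immediate. If in Case~2 some honest party $q$ locks a value $w$, apply your Case~1 argument to $q$: every honest party (including $p^*$) then receives $n-t_s-t_i$ $\vote(w)$ messages and only for $w$, hence proposes $w$ to $\GCIT$, and validity of $\GCIT$ forces $p^*$'s output to be $(w,1)$, so $w=v$. Alternatively, adopt the paper's split --- Case~A: some honest party locks; Case~B: no honest party locks --- which sidesteps the issue, since in Case~B every output comes from $\GCIT$ and the consistency of $\GCIT$ suffices directly.
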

\begin{proof}
We distinguish two cases:
\begin{compactitem}
    \item Let there exist an honest party $p_i$ that locks $(v, 1)$ in Round 4.
    Hence, honest party $p_i$ receives $n - t_s$ $\mathsf{vote}(v)$ messages in Round 4.
    Now, consider any honest party $p_j$.
    We further consider two scenarios:
    \begin{compactitem}
        \item Let $p_j$ lock $(v', g')$ in Round 4.
        In this case, $g' = 1$.
        Moreover, as $p_j$ decides in Round 4, $p_j$ receives $n - t_s - t_i$ $\mathsf{vote}(v')$ messages.
        As $n - t_s - t_i > t_i$, \Cref{lemma:one_vote} guarantees that $v' = v$.

        \item Let $p_j$ decide $(v', g')$ after deciding $(v', g')$ from $\GCIT$.
        As $p_i$ receives $n - t_s$ $\mathsf{vote}(v)$ messages in Round 4, every honest party receives (at least) $n - t_s - t_i$ $\mathsf{vote}(v)$ messages in Round 4.
        Importantly, as $n - t_s - t_i > t_i$, \Cref{lemma:one_vote} guarantees that no honest party receives $n - t_s - t_i$ $\mathsf{vote}(v')$, for any value $v' \neq v$.
        Hence, every honest party $p_i$ sets its local variable $v_i$ to $v$ and proposes $v_i$ to $\GCIT$.
        Finally, the strong validity property of $\GCIT$ ensures $v' = v$.
    \end{compactitem}
    
    \item  No honest party locks $(v, 1)$ in Round 4.
    In this case, the consistency property follows directly from the consistency property of $\GCIT$.
\end{compactitem}
The lemma holds.
\qed
\end{proof}

\paragraph{Complexity.} The protocol $\GCIT$ (using \cite{AttiyaW23}) has bit complexity of $O(\lambda n^2)$ for inputs of size $O(\lambda)$, and our addition to the protocol incurs an additive factor of $O(\lambda n^2)$ of communication bits. Thus in total $\GCtwo$ has bit complexity of $O(\lambda n^2)$. This concludes the proof of \cref{lemma:GCtwo}.
\qed
\end{proof}
\section{Conclusion}\label{sec:conclusion}
In this work, we have constructed efficient crypto-agnostic Byzantine agreement, and in particular a protocol with $O(\lambda n^2)$ bit complexity and constant round complexity in the authenticated setting.
Natural open problems are as follows:
\begin{itemize}
\item Our \name\ protocols use $O(\lambda n^2)$ bits only when the input message is of size $O(\lambda)$, and otherwise $O(L n^2)$ for $L = \Omega(\lambda)$. It is thus natural to consider efficient crypto-agnostic BA for \emph{long messages}.
The main difficulty here is keeping complexity low while also providing security in the sabotaged or information-theoretic setting where it is difficult enough to build efficient protocols~\cite{civit2024error} let alone in the crypto-agnostic setting.
\item As \name\ optimises round complexity in the authenticated case, it may be of interest to instead optimise for the \emph{sabotaged} case (i.e., not running the authenticated protocol in `good' executions).
\item Finally, extending our results to the model where the public key infrastructure may be inconsistent or arbitrarily broken as considered in~\cite{EC:FitHolMul04,cryptoeprint:2009/434} may be of interest.
\end{itemize}

\iffull
\section*{Acknowledgements}
Daniel Collins was supported in part by AnalytiXIN and by Sunday Group, Inc.
\fi

\bibliographystyle{alpha}
{
\bibliography{reportbib,references,
	abbrev3,crypto}
}

\newpage
\appendix

\end{document}